\documentclass{llncs}

\usepackage{microtype}



\usepackage{latexsym,amssymb,amsmath}

\usepackage{cite}

\usepackage{wrapfig} 

\usepackage{graphicx}



\newtheorem{observation}{Observation}

\usepackage[usenames,dvipsnames]{color}

\DeclareMathOperator{\RG}{RG} 
\DeclareMathOperator{\RIG}{RIG} 
\DeclareMathOperator{\GG}{GG} 
\DeclareMathOperator{\DG}{DG} 

\usepackage{paralist}


\title{Witness Rectangle Graphs%
  \thanks{Research of B.A.\ has been partially supported by NSA MSP
    Grants H98230-06-1-0016 and H98230-10-1-0210.  Research of B.A.\
    and M.D. has also been supported by a grant from the U.S.-Israel
    Binational Science Foundation and by NSF Grant CCF-08-30691.
    Research by F.H. has been partially supported by projects MEC
    MTM2006-01267 and MTM2009-07242, and Gen.~Catalunya DGR
    2005SGR00692 and 2009SGR1040.
    A preliminary version of this paper appeared in the \emph{Proceedings of
  the Algorithms and Data Structures Symposium (WADS'11)} \cite{ADH11}.}}

\author{Boris Aronov\inst{1}
  \and
  Muriel Dulieu\inst{1}
  \and
  Ferran Hurtado\inst{2}
    }

\institute{Department of Computer Science and Engineering, Polytechnic 
 Institute of NYU, Brooklyn, NY~11201-3840, USA;
 \email{aronov@poly.edu}, \email{mdulieu@gmail.com}
 \and Departament de Matem\`{a}tica Aplicada~II,
    Universitat Polit\`{e}cnica de Catalunya (UPC),
    Barcelona, Spain. \email{ferran.hurtado@upc.edu}}

\intextsep 5pt plus 1pt minus 1pt
\headsep 8pt
\footnotesep 6pt
\textfloatsep 5pt plus 1.5pt minus 2pt

\makeatletter
\renewcommand\section{\@startsection{section}{1}{\z@}%
                       {-8\p@ \@plus -2\p@ \@minus -2\p@}%
                       {6\p@ \@plus 2\p@ \@minus 2\p@}%
                       {\normalfont\large\bfseries\boldmath
                        \rightskip=\z@ \@plus 4em\pretolerance=10000 }}
\makeatother

\clubpenalty=001
\widowpenalty=001
\begin{document}
\maketitle

  \begin{abstract}
    In a \emph{witness rectangle graph} (WRG) on vertex point set $P$
    with respect to witness point set~$W$ in the plane, two points
    $x,y$ in $P$ are adjacent whenever the open isothetic rectangle with $x$ and
    $y$ as opposite corners contains at least one point in
    $W$.  WRGs are representative of a larger family of witness
    proximity graphs introduced in two previous papers.


    We study graph-theoretic properties of WRGs.  We prove that
    any WRG has at most two non-trivial connected components.  We
    bound the diameter of the non-trivial connected components of a
    WRG in both the one-component and two-component cases.  In the
    latter case, we prove that a graph is representable as a
    WRG if and only if each component is a connected co-interval graph, thereby
    providing a complete characterization of WRGs of this type.  We
    also completely characterize trees drawable as WRGs. 
    In addition, we prove that a WRG with no isolated vertices has domination number at most four.
    
    Moreover, we show that any combinatorial graph can be drawn as a WRG using a combination of 
    positive and negative witnesses.

    Finally, we conclude with some related results on the number of
    points required to stab all the rectangles defined by a set of
    $n$~points.
  \end{abstract}

\pagestyle{plain}

\section{Introduction}\label{section:introduction}

\emph{Proximity graphs} have been widely used in situations in which there is a need of expressing the fact that some objects in a given set---which are
assigned to nodes in the graph---are close, adjacent, or neighbors, according to some geometric, physical, or conceptual
criteria, which translates to edges being added to the corresponding graph.
In the geometric scenario the objects are often points and the goal is to analyze the shape or the structure of the set of spatial data they describe or represent. This situation arises in Computer Vision, Pattern Recognition, Geographic Information Systems, and Data Mining, among other fields. The paper \cite{JT92} is a survey from this viewpoint, and several related papers appear in \cite{T88}. In most proximity graphs, given a point set $P$, the adjacency between two points $p,q\in P$ is decided by checking
whether in their \emph{region of influence} there is no
other point from $P$, besides $p$ and $q$. One may say that the presence of another point
is considered an \emph{interference}. There are many variations,
depending 
on the choice of the family of influence regions \cite{JT92,Co06,Li08}.


Given a combinatorial graph $G=(V,E)$, a \emph{proximity drawing} of $G$ consists of
a choice of a point set $P$ in the plane with $|P|=|V|$, for a given criterion of neighborhood
for points, such that the corresponding proximity graph on $P$ is
isomorphic to $G$.  This question
belongs to the subject of \emph{graph drawing problems}, in which the emphasis is on geometrically representing graphs
with good readability properties and fulfilling some aesthetic criteria \cite{BETT98}. The main issues 
are to characterize the graphs that admit a certain kind of representation, and to design efficient algorithms for finding such a drawing,
whenever possible.

Proximity drawings have been studied extensively and utilized widely \cite{BLL94,Li08}. However, this  kind of representation is
somehow limited and there have been 
some attempts to expand the class, for example using weak proximity graphs \cite{BLW06}.
Another recently introduced generalization is the
concept of \emph{witness proximity graphs} \cite{ADH,ADH08}, in which the adjacency of points in a
given vertex set $P$ is decided by the presence or
absence of points from a second
point set $W$---the \emph{witnesses}---in their region of
influence. This generalization includes the classic proximity graphs
as a particular case,
and offers both a stronger tool for neighborhood description and much more flexibility for graph representation purposes.

In the \emph{positive witness} version, there is an adjacency when
a witness point is covered by the region of influence.  In the \emph{negative witness} version,
two points are neighbors when they admit a region of influence free of any witnesses.
In both cases the decision is based on the presence or absence of witnesses
in the regions of influence, and a combination of both types of
witnesses may also be considered; refer to Section~\ref{sec:PlusMinusW}. Observe that by taking $W=P$ playing a negative role, we recover the original proximity
graphs; so this is a proper generalization.
Witness graphs were introduced in \cite{ADH}, where the focus is on the generalization of \emph{Delaunay graphs}. The witness version
of \emph{Gabriel graphs} was studied in \cite{ADH08}, and a thorough exploration of this set of problems is the main topic
of the thesis \cite{thesis}.

In this paper, we consider a positive witness proximity
graph related to the rectangle-of-influence graph, the
witness rectangle graph.  The \emph{rectangle of influence graph}
$\RIG(P)$, also named the Delaunay graph of a point set in the plane with respect to axis-parallel rectangles~\cite{Janos}, is usually studied as one of the basic proximity graphs
\cite{LLMW98,Li08}.
In $\RIG(P)$, $x,y \in P$ are adjacent when the \emph{rectangle
  $B(x,y)$ they define} covers no third point from $P$; $B(x,y)$ is
the unique open isothetic rectangle with $x$ and $y$ at its opposite corners.
In other words, two points are adjacent to each other when they are rectangularly visible~\cite{RectangularVisibility1, RectangularVisibility2, RectangularVisibility3}.
Much effort has been devoted to the rectangle of influence drawability problem~\cite{LLMW98, RectangleDrawability1, RectangleDrawability2, RectangleDrawability3, RectangleDrawability4, RectangleDrawability5, RectangleDrawability6}.
%
The \emph{witness rectangle graph}~(WRG) of vertex point set $P$ (or,
simply, \emph{vertices}) with respect to witness point set $W$
(\emph{witnesses}), denoted $\RG^+(P,W)$, is the graph with the vertex
set $P$, in which two points $x,y\in P$ are adjacent when the
rectangle $B(x,y)$ contains at least one point of $W$.
The graph $\RG^+(P,\varnothing)$ has no edges.  When $W$ is
sufficiently large and appropriately distributed, $\RG^+(P,W)$ is
complete. 
Ichino and Slansky's \cite{MNG} \emph{mutual neighborhood graph} 
is precisely the negative-witness version of the witness rectangle 
graph. See further discussion in Section~\ref{sec:MNG}.
We also note that a
negative-witness version of this graph with $W=P$ would be precisely
$\RIG(P)$ discussed above;
in fact $\RG^+(P,P)$ is precisely the complement of
$\RIG(P)$.     An example is shown in
Figure~\ref{fig:RGexample}. 

We show in this paper that the connected components of WRGs are geometric examples of graphs with small diameter; these have
  been attracting attention in the pure graph theory setting \cite{SmallDiameter1, SmallDiameter2}, and are far from being
  well understood, even for diameter two \cite{diameterTwo1,diameterTwo2,diameterTwo3}.
  We prove below that the maximum domination number of a WRG with no isolated vertices is four, which we find interesting, considering that the domination number of sufficiently large \emph{planar} graphs of diameter three is at most seven~\cite{Domination, Domination2}.

\begin{figure}[htbp!]
  \centering
  \includegraphics[scale=0.8]{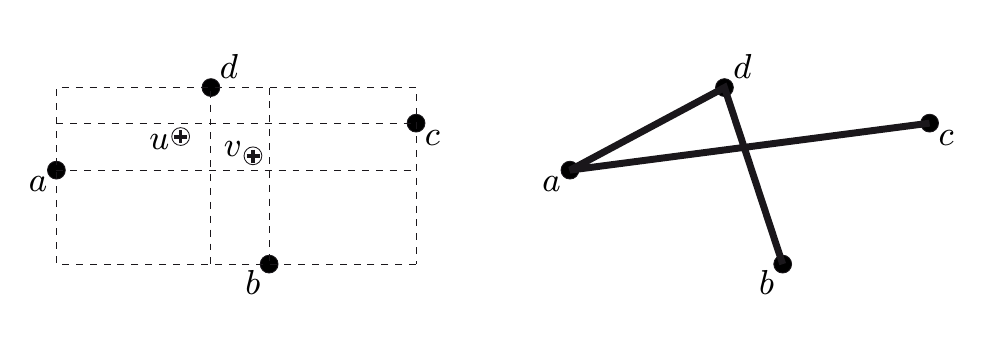}
  \caption{Left: A set of points $P=\{a,b,c,d\}$ and a witness set
    $W=\{u,v\}$. Right: the witness rectangle graph $\RG^+(P,W)$.  In
    all our figures for WRGs solid dots denote vertices and dots with
    a cross denote (positive) witnesses.  }
  \label{fig:RGexample}
\end{figure}

Besides some computational issues, such as the construction of $\RG^+(P,W)$ for given sets $P$
and $W$ in an output-sensitive manner, 
we study several graph-theoretic properties of WRGs: (a)~We
completely characterize trees drawable as WRGs (Theorem~\ref{tree}).
(b)~We argue that any WRG has zero, one, or two \emph{non-trivial} connected
components (see the definition below and Theorem~\ref{thm:rectangleStructural}). (c)~We prove
that the diameter of a single-component WRG is at most six, and that this bound is tight in the worst case
(Theorem~\ref{thm:rectangleStructural} and subsequent discussion). (d)~We prove that the diameter
of a (non-trivial) connected component of a two-component WRG is at
most three and this can be achieved in the worst case
(Theorem~\ref{thm:rectangleStructural}).
(e)~In the two-component case, we provide a complete characterization
of graphs representable as WRGs. Such graphs, disregarding
isolated vertices, are precisely disjoint unions of two connected co-interval
graphs (Theorem~\ref{staircase}).
This last result allows us to recognize in linear time if a combinatorial graph with two
non-trivial components can be drawn as a WRG, and to construct such a drawing if it exists.
(f)~We prove that the maximum domination number of a WRG with no isolated vertices is four.


In Section~\ref{sec:MNG}, we give a counterexample to a theorem of
Ichino and Slansky in~\cite{MNG}.
In Section~\ref{sec:PlusMinusW}, we show that any combinatorial graph can be drawn as a WRG with positive and negative witnesses, using a quadratic number of witnesses.
Finally, in Section~\ref{section:stabbing}, we present some related
results on stabbing rectangles defined by a set of points with other
points.  They can be interpreted as questions on ``blocking''
rectangular influences.

\paragraph*{Terminology and notation.}
Throughout the paper, we will work with finite point sets in the
plane, in which no two points lie on the same vertical or the same
horizontal line.

Hereafter, for a graph $G=(V,E)$ we write $xy\in E$ or $x\sim y$ to
indicate that $x,y\in V$ are adjacent in $G$, and generally use
standard graph terminology as in \cite{CL04}.
%
When we speak of a \emph{non-trivial connected component} of a graph,
we refer to a connected component with at least one edge (and at least
two vertices).

Given two graphs $G_1=(V_1,E_1)$ and $G_2=(V_2,E_2)$ with disjoint
vertex sets, their \emph{join} is the graph $G_1+G_2=(V_1 \cup V_2,
E_1\cup E_2 \cup V_1 \times V_2)$~\cite{mathworld:join}.

\paragraph*{How to compute a witness rectangle graph.}

De Berg, Carlsson, and Overmars \cite{BCO92} generalized the notion of 
dominance, in a way that is closely related to witness rectangle graphs, by defining dominance pairs $p,q$ of a set of points $P$, with respect to 
a set $O$ of so-called obstacle points.
More precisely, $p$ is said to \emph{dominate $q$ with respect to $O$} if there is no point $o \in O$ such that $p$ dominates $o$ and $o$ dominates $q$.
Recall that $p$ \emph{dominates} $q$ if and only if $x(p) \geq x(q)$,
$y(p) \geq y(q)$, and $p \neq q$.

They prove the following theorem:


\begin{theorem}[De Berg, Carlsson, and Overmars \cite{BCO92}]
All dominance pairs in a set of points $P$ with respect to a set of points $O$ can be computed in time 
$O(n \log n + k)$, where $n = |P| + |O|$ and $k$ is the number of answers.
\end{theorem}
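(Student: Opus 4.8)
The plan is to reduce the problem to reporting empty-rectangle dominance pairs and then to solve that by a plane sweep with an augmented dynamic data structure. The starting observation is that, assuming $p$ dominates $q$, the pair $(p,q)$ is a dominance pair with respect to $O$ exactly when the open rectangle $B(p,q)$ (axis-parallel, with $p$ and $q$ as opposite corners) contains no point of $O$: an obstacle $o$ with $p$ dominating $o$ and $o$ dominating $q$ is precisely a point of $O$ lying in $B(p,q)$. So the goal becomes to report every ordered pair of points of $P$ that is in dominance (``northeast--southwest'') position and whose rectangle is empty of obstacles, and the number of such pairs is exactly the output size $k$.

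I would run a left-to-right plane sweep over $P\cup O$ sorted by $x$-coordinate, maintaining two dynamic balanced search trees keyed by $y$-coordinate: one, $T_O$, on the obstacles processed so far, augmented so one can enumerate in decreasing $y$ the Pareto-maximal obstacles below a given height and answer range-maximum queries on their $x$-coordinates; and one, $T_P$, on the vertices processed so far, augmented to answer queries of the form ``the highest processed vertex below a given height whose $x$-coordinate is at least a given value.'' When the sweep reaches a vertex $p$, every point with smaller $x$-coordinate has already been processed, so the obstacles able to block a partner of $p$ are exactly those dominated by $p$, and among these only the ones on the Pareto-maximal staircase $\Sigma_p$ of obstacles dominated by $p$ matter, since an obstacle off $\Sigma_p$ is dominated by a staircase obstacle that blocks at least as much. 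A processed vertex $q$ dominated by $p$ is then a valid partner of $p$ iff no obstacle of $\Sigma_p$ dominates $q$; equivalently, scanning downward from the level $y(p)$ and keeping the running maximum $\mu$ of the $x$-coordinates of the staircase obstacles seen so far, $q$ is valid iff $\mu\le x(q)$ at $q$'s level. Because $\mu$ is nondecreasing along the scan, the scan may be stopped as soon as $\mu\ge x(p)$, and with the two augmented trees one advances the scan in $O(\log n)$ per step, jumping directly from one staircase obstacle, or one candidate vertex, to the next.

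The step I expect to be the main obstacle is proving that the total running time is genuinely $O(n\log n + k)$ and not merely $O(n\log n + k + W)$ for a ``wasted work'' term $W$: naively, a single vertex $p$ could be forced to walk past many staircase obstacles, or many dominated-but-invalid vertices, without producing output. I would handle this by an amortized analysis: maintain the obstacle staircase incrementally across the whole sweep, so that each obstacle is inserted once and undergoes only $O(1)$ structural changes in total; charge each step of a vertex's scan either to a reported pair or to one of these $O(n)$ structural changes; and use the monotonicity of $\mu$ together with the ``highest vertex with $x$-coordinate at least $\mu$'' query to ensure no invalid vertex is touched more than a constant number of times. Making this charging scheme airtight is the technical heart of the argument, and it is essentially what the cited work of De Berg, Carlsson, and Overmars accomplishes. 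Granting it, the sort costs $O(n\log n)$, the $O(n)$ structural updates cost $O(\log n)$ each, and every reported pair is paid for in $O(\log n)$, giving the claimed $O(n\log n + k)$ bound.
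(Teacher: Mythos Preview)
The paper does not prove this theorem at all: it quotes it verbatim from De~Berg, Carlsson, and Overmars~\cite{BCO92} and immediately applies it. There is therefore no ``paper's own proof'' to compare your proposal against; any argument you give is already going beyond what the present paper does.

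That said, your sketch has a real gap in the final accounting. You explicitly say that each step of the scan costs $O(\log n)$ and that each step is charged either to one of $O(n)$ structural changes or to a reported pair. That charging scheme yields
\[
O(n)\cdot O(\log n) \;+\; k\cdot O(\log n) \;=\; O\bigl((n+k)\log n\bigr),
\]
not $O(n\log n + k)$; your concluding sentence ``every reported pair is paid for in $O(\log n)$, giving the claimed $O(n\log n + k)$ bound'' is simply arithmetically wrong. The hard part of the cited result is precisely eliminating the $\log n$ factor on the output term: one must arrange that, once the relevant slab boundaries have been located in $O(\log n)$ time, the valid partners inside each slab are enumerated in $O(1)$ time apiece (e.g., by threading the vertices into a secondary linked structure, or by a more careful organization of the sweep so that contiguous output ranges are reported wholesale). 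Your augmented-BST queries of the form ``highest vertex below a given height with $x$-coordinate at least $\mu$'' cost $\Theta(\log n)$ each, and you invoke one per reported pair, so as written the algorithm does not achieve the stated bound. The overall strategy (sweep, obstacle staircase, monotone threshold $\mu$) is sound and is in the spirit of~\cite{BCO92}, but the $O(1)$-per-output refinement is missing, and that is exactly the nontrivial content of the theorem.
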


Collecting all dominance pairs in a set of points $P$ with respect to a set of points $W$, and repeating the procedure after rotating the plane by $90^\circ$, one obtains the negative version of the witness rectangle graph.
A simple modification of their algorithm yields the positive version:

\begin{corollary}
  Let $P$ and $W$ be two point sets in the plane.  The
  witness rectangle graph $\RG^+(P,W)$ with $k$~edges can be computed
  in $O(n \log n + k)$ time, where $n:=\max\{|P|,|W|\}$.
\end{corollary}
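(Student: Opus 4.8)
The plan is to reduce the computation of $\RG^+(P,W)$ to the dominance‑pairs machinery of the theorem above and then modify its reporting step. First I would fix an orientation: for an edge $xy$ of $\RG^+(P,W)$ in which $y$ dominates $x$, the open box $B(x,y)$ contains a witness precisely when some $w\in W$ dominates $x$ and is dominated by $y$, i.e.\ precisely when the pair $(y,x)$ — although $y$ dominates $x$ — is \emph{not} a dominance pair of $P$ with respect to $W$ in the sense of De Berg, Carlsson and Overmars. An edge whose endpoints are incomparable under dominance becomes an edge of exactly this kind after a $90^\circ$ rotation of the plane, and since no two points share a coordinate every edge falls into exactly one of the two cases; so I would run everything twice (once on the input, once on the rotated input), and each edge gets reported exactly once. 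This leaves the core task: enumerate, in time $O(n\log n + k)$, all ordered pairs $(y,x)$ with $y$ dominating $x$ that fail to be dominance pairs with respect to $W$.

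I would \emph{not} obtain these by running the cited algorithm and complementing its output, since the number of dominance pairs can be quadratic while $k$ is small. Instead I would reuse the skeleton of that algorithm — the $x$‑coordinate sweep (equivalently, recursion) together with its maintenance, for the vertex $x$ currently processed, of the staircase $S_x$ formed by the minima of the witnesses that dominate $x$ — and alter only the reporting step. Where the original algorithm reports the vertices dominating $x$ that dominate \emph{no} point of $S_x$ (the ``pocket'' below the staircase), I would instead report the vertices that dominate \emph{some} point of $S_x$ (and hence, by transitivity, dominate $x$); by the first paragraph these are exactly the upper‑right neighbours of $x$ in $\RG^+(P,W)$. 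Since the sweep structure and the staircase maintenance are untouched, the non‑reporting work stays $O(n\log n)$, the $90^\circ$‑rotated run only doubles the constant, and the trivial accounting $|P|+|W|\le 2n$ lets this match the form $n=|P|+|W|$ of the theorem.

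The hard part will be verifying that this re‑aimed reporting step is still output‑sensitive — that handling $x$ costs $O(\log n)$ plus $O(1)$ per reported neighbour, so that over all $x$ and both orientations the reporting totals $O(k)$. I would establish this by walking the boundary of the ``blocked'' region as a simultaneous traversal of $S_x$ and of the stored vertices: each step either outputs a new neighbour of $x$ or advances one step along $S_x$, and I would charge the staircase advances by the same amortization the original algorithm already uses for its ``pocket'' walk (alternatively, keeping the vertices together with the current staircase in one balanced search structure and emitting the blocked region via a short sequence of three‑sided range queries of total cost $O(\log n)$ plus output size). Assembling the $O(n\log n)$ skeleton cost, the $O(k)$ reporting cost, and the two orientations then yields the stated bound.
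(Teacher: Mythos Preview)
Your proposal is correct and follows essentially the same approach as the paper, which offers no proof beyond the remark that ``a simple modification of [the De~Berg--Carlsson--Overmars] algorithm yields the positive version''; you have in fact supplied considerably more detail than the paper does, correctly identifying that positive-witness edges are the \emph{non}-dominance pairs, that one must re-aim the reporting step rather than complement the output, and that the $90^\circ$ rotation handles the second slope class.
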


\section{Structure of Witness Rectangle Graphs}\label{section:rectangles}

Let $G:=\RG^+(P,W)$ be the witness rectangle graph of vertex set $P$
with respect to witness set $W$.
We assume that the set of witnesses is \emph{minimal}, in the sense that
removing any witness from $W$ changes $G$.  Put
$n:=\max\{|P|,|W|\}$ and let $E:=E(G)$ be the edge set of $G$.  We
partition $E$ into $E^+$ and $E^-$ according to the slope sign of the
edges when drawn as segments.  Slightly abusing the terminology we refer
to two edges of $E^+$ (or two edges of $E^-$) as having \emph{the same
  slope} and an edge of $E^+$ and an edge of $E^-$ as having
\emph{opposite slopes}.

\begin{wrapfigure}[7]{r}{0pt}
  \includegraphics[width=4.5cm]{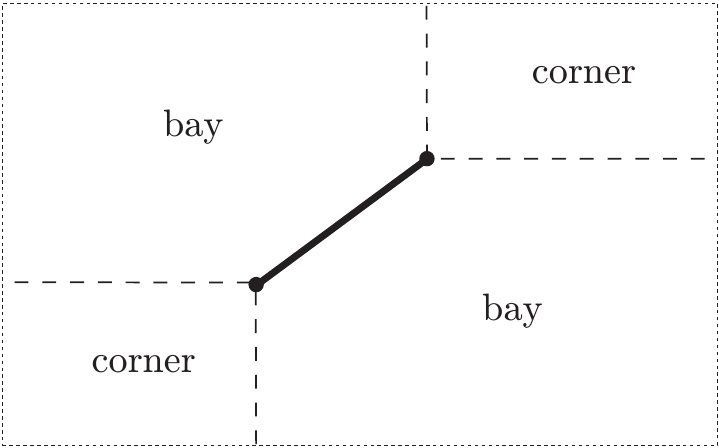}
\end{wrapfigure}
Recall that the open isothetic rectangle (or \emph{box}, for short) defined
by two points $p$ and $q$ in the plane is denoted $B(p,q)$; for an
edge $e=pq$ we also write $B(e)$ instead of $B(p,q)$.  Every edge $e$,
say in $E^+$, defines four regions as in the figure on the right,
that we call \emph{(open) corners} and \emph{(closed) bays}.



\begin{observation} \label{obs:quadrants} Every $x\in P$ inside a
  corner of an edge $e$ is adjacent to at least one endpoint of $e$.
\end{observation}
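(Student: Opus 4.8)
The plan is to argue directly from the definition of adjacency in a WRG, using the geometry of the corner regions. Fix an edge $e=pq\in E^+$, drawn as a segment of positive slope, so that (say) $p$ is the lower-left corner of $B(e)$ and $q$ is the upper-right corner. The four regions defined by the two vertical lines through $p,q$ and the two horizontal lines through $p,q$ that lie outside $B(e)$ split into two \emph{corners}---the quadrant to the lower-left of $p$ and the quadrant to the upper-right of $q$---and two \emph{bays}, the semi-infinite slabs extending left/down from $p$'s row-and-column and right/up from $q$. Since $e$ is an edge, $B(p,q)$ contains a witness $w\in W$. The claim concerns a vertex $x\in P$ lying in one of the two corners of $e$; by the symmetry of the construction (a $180^\circ$ rotation about the midpoint of $pq$ swaps the two corners and fixes $B(e)$), it suffices to treat the case where $x$ lies in the lower-left corner, i.e. $x(x)\le x(p)$ and $y(x)\le y(p)$ (strict, since no two points share a coordinate, so $x(x)<x(p)$ and $y(x)<y(p)$).

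The key step is then a simple containment of boxes: I would show that $B(e)=B(p,q)\subseteq B(x,q)$. Indeed $x$ is weakly to the left of and below $p$, and $q$ is the opposite corner of $B(p,q)$, so the axis-parallel box with $x$ and $q$ as opposite corners has its left side at $x(x)\le x(p)$ and its bottom side at $y(x)\le y(p)$, while its right side and top side coincide with those of $B(p,q)$; hence every point of $B(p,q)$ lies in $B(x,q)$. In particular the witness $w\in B(p,q)\subseteq B(x,q)$ witnesses the edge $xq$, so $x\sim q$ in $G$. The symmetric case gives $x\sim p$. Either way $x$ is adjacent to an endpoint of $e$, which is the assertion. (The case $e\in E^-$ is identical after reflecting in a vertical line, which maps $E^-$ to $E^+$ and corners to corners.)

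There is essentially no main obstacle here---the statement is a one-line monotonicity observation about boxes---so the only thing to be careful about is bookkeeping: making sure the ``corner'' is the correct pair of the four outside regions (the two \emph{bays} do \emph{not} have this property, since a point in a bay need share only one coordinate-halfplane with $B(e)$ and the enlarged box $B(x,q)$ or $B(x,p)$ need not contain $B(e)$), and making sure the non-degeneracy hypothesis (no two points on a common vertical or horizontal line) is invoked so that all the relevant inequalities are strict and $x$ is genuinely distinct from $p$ and $q$. I would phrase the final write-up around the single containment $B(p,q)\subseteq B(x,q)$ (resp.\ $\subseteq B(x,p)$) and let the definition of $\RG^+$ do the rest.
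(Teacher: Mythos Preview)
Your argument is correct and is exactly the reasoning the paper intends: the paper states Observation~\ref{obs:quadrants} without any proof, treating it as immediate from the accompanying figure. The box containment $B(p,q)\subseteq B(x,q)$ when $x$ lies in the open third quadrant of~$p$ (and symmetrically $B(p,q)\subseteq B(x,p)$ when $x$ lies in the open first quadrant of~$q$), together with the existence of a witness in $B(p,q)$, is the entire content of the observation, and you have written it out carefully.

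One minor remark on exposition rather than mathematics: your sentence describing ``four regions defined by the two vertical lines through $p,q$ and the two horizontal lines through $p,q$'' is slightly off, since those four lines cut the exterior of $B(e)$ into eight pieces, not four. The paper's two \emph{corners} are indeed the two open quadrants you identify; the two \emph{bays} are the remaining closed regions on the other diagonal side of the box (each bay is an L-shaped union of three of the eight pieces). Since the observation concerns only the corners, this does not affect your proof.
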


Note that for any $P$, $W$, and $P' \subset P$, the graph $\RG^+(P',W)$
is an induced subgraph of $\RG^+(P,W)$, so the class of graphs
representable as WRGs is closed under the operation of taking induced
subgraphs.

Two edges are \emph{independent} when they share no vertices and the
subgraph induced by their endpoints contains no third edge.  Below we
show that $G$ cannot contain three pairwise independent edges, which
imposes severe constraints on the graph structure of $G$.

\begin{lemma}
  \label{lem:2sameSlope}
  Two independent edges in $E^+$ (respectively, $E^-$) cannot cross or
  share a witness.  The line defined by their witnesses is of negative
  (respectively, positive) slope.
\end {lemma}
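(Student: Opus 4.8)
The plan is to work with two independent edges $e_1 = p_1q_1$ and $e_2 = p_2q_2$, both in $E^+$ (the $E^-$ case follows by reflection), where by independence the four endpoints induce exactly these two edges and no others. First I would fix coordinates so that each $e_i$ has positive slope, meaning within each pair the lower-left point dominates-from-below the upper-right point; say $p_i$ is the lower-left corner of $B(e_i)$ and $q_i$ the upper-right corner. Let $w_i \in W$ be a witness in $B(e_i)$, which exists since $e_i \in E$. The key structural fact I will lean on repeatedly is Observation~\ref{obs:quadrants}: any vertex lying in an open corner of $e_i$ is adjacent to an endpoint of $e_i$, so since the only edges among the four endpoints are $e_1$ and $e_2$, no endpoint of $e_2$ may lie in a corner of $e_1$, and vice versa.

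The first claim to establish is that $B(e_1)$ and $B(e_2)$ are disjoint as open rectangles (which in particular forbids the edges, drawn as segments, from crossing, since a crossing point would lie in both open boxes). Suppose for contradiction the two open boxes intersect. I would argue that one of the endpoints of $e_2$ must then fall in a corner region of $e_1$: the box $B(e_2)$ sticks out of $B(e_1)$ on at least one side (they are distinct boxes and neither contains the other, else again an endpoint sits in a corner of the containing box's defining edge — here one checks that if $B(e_2) \subseteq B(e_1)$ then $p_2, q_2$ lie in the closure of $B(e_1)$, and a short case analysis on which of the four "sub-regions" of $B(e_1)$ they occupy, combined with the no-shared-coordinates assumption, forces at least one of them strictly into a corner). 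Pushing this case analysis through—tracking, for each of the ways two axis-parallel positive-slope boxes can overlap, where the "outer" corners $p_2$ and $q_2$ of $e_2$ lie relative to the corner/bay decomposition of $e_1$—yields in every case an endpoint of one edge inside an open corner of the other, contradicting independence via Observation~\ref{obs:quadrants}. This case enumeration is the part I expect to be the main obstacle; the cleanest route is probably to observe that because $e_1, e_2$ are independent and in particular $p_1 \not\sim p_2$ etc., each endpoint of $e_2$ lies in one of the two (closed) bays of $e_1$ or strictly outside the "slab" swept by $e_1$, and then to show overlapping boxes is incompatible with both endpoints of $e_2$ being bay-confined.

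Once disjointness of the boxes is in hand, sharing a witness is immediate: a common witness $w$ would lie in $B(e_1) \cap B(e_2) = \varnothing$. It remains to pin down the slope of the line $\ell$ through $w_1$ and $w_2$. Since $B(e_1)$ and $B(e_2)$ are disjoint axis-parallel boxes, they are separated by a vertical or a horizontal line; combined with the fact that $e_1$ lies below the bays "above" it and $e_2$ similarly, and that neither endpoint of either edge sits in a corner of the other, I can deduce the two boxes are arranged so that (relabelling if needed) $B(e_2)$ lies in the open lower-right or upper-left region determined by $B(e_1)$—the "negative-slope" arrangement—rather than lower-left or upper-right, because those latter arrangements would place, e.g., $p_2$ in a corner of $e_1$. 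Since $w_i \in B(e_i)$, the witness $w_2$ then lies strictly down-and-to-the-right (or up-and-to-the-left) of $w_1$, so the line $w_1w_2$ has negative slope, as claimed. I would write out the two-case slope argument explicitly but keep the earlier box-overlap enumeration as the real workhorse of the proof.
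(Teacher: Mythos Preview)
Your central claim---that for two independent edges $e_1,e_2\in E^+$ the open boxes $B(e_1)$ and $B(e_2)$ must be disjoint---is false, and since the rest of your argument (no shared witness, negative witness slope) is built on it, the whole plan collapses. A concrete counterexample: take $a=(0,0)$, $b=(10,10)$, $c=(-5,3)$, $d=(5,15)$, with witnesses $w_1=(7,1)\in B(a,b)$ and $w_2=(-2,12)\in B(c,d)$. One checks directly that no box $B(a,c)$, $B(a,d)$, $B(b,c)$, $B(b,d)$ contains either witness, so $ab$ and $cd$ are independent edges of $E^+$; yet $B(a,b)\cap B(c,d)=(0,5)\times(3,10)\neq\varnothing$. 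In this example no endpoint of one edge lies in an open corner of the other, so Observation~\ref{obs:quadrants} gives you nothing to work with, and the case analysis you sketch (``tracking \dots\ where the outer corners $p_2$ and $q_2$ lie relative to the corner/bay decomposition of $e_1$'') cannot terminate in a contradiction. Note also that crossing of the \emph{segments} does not reduce to intersection of the boxes in the way you suggest: two positive-slope segments can have overlapping boxes without crossing (as above), so even the implication you state in passing is in the wrong direction for what you need.

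The paper's proof avoids this trap by never asserting box disjointness. It handles the three parts separately and directly: a shared witness $w$ would have $a,c$ in its third quadrant and $b,d$ in its first, forcing $a\sim d$ and $c\sim b$; if the segments cross, one shows $c,d\notin B(a,b)$ and then that $B(a,d)\cup B(c,b)\supset B(a,b)$, so the witness for $ab$ creates a forbidden adjacency; and the slope claim follows because if $w_2$ were, say, in quadrant~I of $w_1$, then $a$ (in quadrant~III of $w_1$) would also lie in quadrant~III of $w_2$ and $d$ in quadrant~I of $w_2$, giving $a\sim d$. You should restructure your proof along these lines rather than trying to separate the boxes.
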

\begin{proof} Let the two edges be $ab,cd\in E^+$, with $x(a)<x(b)$
  and $x(c)<x(d)$.  A common witness would have $a$ and $c$ in its
  third quadrant and $b$ and $d$ in the first, implying $a\sim d$ and
  $c\sim b$, a contradiction.  If $ab$ and $cd$ cross, assume
  without loss of generality that $x(a)<x(c)$.  Neither $c$ nor $d$
  can be inside $B(a,b)$ (because of Observation~\ref{obs:quadrants})
  and hence $B(a,d)\cup B(c,b)\supset B(a,b)$, implying $a\sim d$ or
  $c\sim b$, a contradiction.  Finally, the second part of the
  statement is a direct consequence of
  Observation~\ref{obs:quadrants}. \hfill $\Box$
\end{proof}

\begin{lemma}
  \label{lem:2oppositeSlope}
  Two independent edges with opposite slopes must share a witness.
\end {lemma}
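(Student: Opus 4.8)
The plan is to prove the stronger statement that the open boxes $B(a,b)$ and $B(c,d)$ contain a \emph{common} witness, where I write the edge in $E^+$ as $ab$ with $a$ its lower-left endpoint and the edge in $E^-$ as $cd$ with $c$ its upper-left endpoint, and I fix witnesses $w_1\in B(a,b)$ and $w_2\in B(c,d)$. After rescaling the two axes (which preserves boxes and hence the WRG structure) I may assume $a=(0,0)$ and $b=(1,1)$. The argument runs by contradiction: suppose no witness lies in $R:=B(a,b)\cap B(c,d)$, and use $w_1,w_2$ and the independence hypothesis to force an edge between $\{a,b\}$ and $\{c,d\}$.

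First I would fix the coarse positions of $c$ and $d$. Since $ab$ and $cd$ are independent, neither of $c,d$ lies northeast of $b$ or southwest of $a$ (the open corners of $ab$), and neither of $a,b$ lies northwest of $c$ or southeast of $d$ (the open corners of $cd$), by Observation~\ref{obs:quadrants}. Pairing up the two corner constraints that pinch each coordinate and invoking general position yields $x(c)<1$, $y(c)>0$, $x(d)>0$, $y(d)<1$, together with $x(c)<x(d)$ and $y(c)>y(d)$ from the slope signs; in particular $R\neq\varnothing$.

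Next I would analyze $w_1\in B(a,b)$, where $w_1\notin B(c,d)$ (otherwise we are done). If $w_1$ were in $B(a,c)\cup B(a,d)\cup B(b,c)\cup B(b,d)$, then one of $a,b$ would be adjacent to one of $c,d$, contradicting independence; so $w_1$ avoids all four of these boxes. A short case analysis on the position of $y(w_1)$ relative to $y(d)$ and $y(c)$ (using the location facts from the previous paragraph) shows that the only remaining possibility is that $w_1$ lies in the open northwest quadrant of $c$ or the open southeast quadrant of $d$. The central reflection $(x,y)\mapsto(1-x,1-y)$ exchanges $a\leftrightarrow b$, $c\leftrightarrow d$, preserves $E^+$ and $E^-$, and swaps these two quadrants, so I may assume $w_1$ is northwest of $c$; combined with $x(c)<1$ and $y(c)>0$ this forces $c\in B(a,b)$. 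Finally I would use $w_2$: since $c\in B(a,b)$ is the northwest corner of $B(c,d)$, every point of $B(c,d)$ — in particular $w_2$ — lies to the right of $a$ and below $b$, so as $w_2\notin B(a,b)$ it lies right of $b$ or below $a$. If $w_2$ is right of $b$, then $d$ is even further right while $y(d)<y(b)$, whence $w_2\in B(b,d)$ and $b\sim d$; if $w_2$ is below $a$, then $d$ is even further below while $x(d)>x(a)$, whence $w_2\in B(a,d)$ and $a\sim d$. Either case contradicts independence, which completes the proof.

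I expect the only delicate point to be the case analysis in the third paragraph — verifying that $w_1$, if it escapes the four ``cross'' boxes $B(a,c),B(a,d),B(b,c),B(b,d)$, must lie beyond the northwest corner of $c$ or the southeast corner of $d$ — since that step involves keeping track of several coordinate-sign subcases; everything else is immediate once the corner constraints coming from Observation~\ref{obs:quadrants} are established.
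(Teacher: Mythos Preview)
Your plan is correct and would yield a complete proof, but it takes a noticeably different route from the paper's argument. The paper's proof pivots on the quadrants of the witness $w$ of $ab$: since $a$ and $b$ sit in quadrants~III and~I of~$w$, independence forces $c,d$ into quadrants~II and~IV; if they lie in opposite quadrants, $w$ is already shared, and if (say) both lie in quadrant~II of~$w$, then $B(c,d)$ is entirely in that quadrant and one checks in two lines that $w'\in B(c,d)$ cannot escape $B(a,b)$ without witnessing $c\sim a$ or $c\sim b$. That is the whole proof.

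By contrast, you first harvest coordinate constraints on $c,d$ by applying Observation~\ref{obs:quadrants} to \emph{both} edges, then locate $w_1$ via a case analysis, deduce $c\in B(a,b)$ (up to symmetry), and only then turn to $w_2$. This works --- the covering claim that $(0,1)^2$ is contained in $B(c,d)\cup B(a,c)\cup B(a,d)\cup B(b,c)\cup B(b,d)\cup \mathrm{NW}(c)\cup \mathrm{SE}(d)$ is straightforward once you split on which coordinate of $w_1$ escapes $B(c,d)$ --- but it is considerably longer. The paper's quadrant-of-$w$ viewpoint replaces your entire third paragraph by a single dichotomy, and avoids ever needing to show $c\in B(a,b)$. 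Your approach, on the other hand, is more ``mechanical'' and self-contained: it does not rely on spotting the right point (namely $w$) around which to organize the case split. One small remark on your symmetry step: the central reflection does not literally send $c$ to $d$, it interchanges the \emph{roles} of upper-left and lower-right endpoints of the $E^-$ edge; phrase it that way in the write-up to avoid confusion.
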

\begin{proof}
  Let $ab\in E^+$ and $cd \in E^-$ be independent.  Let $w$ be a
  witness for $ab$ and let $w'$ be a witness for $cd$.  The points $c$
  and $d$ are not in quadrants I~or~III of $w$, as otherwise the two
  edges would not be independent.
  If $w$ is shared, we are done.  Otherwise it cannot be that $c$ lies
  in quadrant~II of $w$ while $d$ lies in quadrant~IV, or vice versa.
  Therefore $c$ and $d$ are either both in quadrant~II or both in
  quadrant~IV of $w$.  Assume, without loss of generality, the former
  is true.  The witness $w'$ is not outside of $B(a,b)$, as we would
  have $c \sim a$ and/or $c \sim b$ (assuming, without loss of
  generality, that $x(c)<x(d)$) and the edges would not be
  independent.  Therefore $w'$ is in $B(a,b)$, so $w'$ is a shared
  witness, as claimed. 
    \hspace*{0pt} \hfill $\Box$
%
\end{proof}


\begin{lemma}
  \label{lem:no3independentSameSlope}
  There are no three pairwise independent edges in $E^+$ (or in
  $E^-$).
\end{lemma}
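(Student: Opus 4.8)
The plan is to argue by contradiction. Suppose $e_1=a_1b_1$, $e_2=a_2b_2$, $e_3=a_3b_3$ are three pairwise independent edges of $E^+$, where $a_i$ denotes the lower-left endpoint and $b_i$ the upper-right endpoint of $e_i$. For each $i$ fix a witness $w_i\in B(a_i,b_i)$, so that $x(a_i)<x(w_i)<x(b_i)$ and $y(a_i)<y(w_i)<y(b_i)$. By Lemma~\ref{lem:2sameSlope}, every pair $w_i,w_j$ (for $i\neq j$) spans a line of negative slope, so the three witnesses form a ``decreasing staircase''; after relabeling the edges accordingly we may assume $x(w_1)<x(w_2)<x(w_3)$ and $y(w_1)>y(w_2)>y(w_3)$.

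The core of the proof is to locate two of the endpoints vertically with respect to the middle witness $w_2$. First I would show $y(a_1)>y(w_2)$: otherwise, from $x(a_1)<x(w_1)<x(w_2)<x(b_2)$ and $y(a_1)<y(w_2)<y(b_2)$ we would get $w_2\in B(a_1,b_2)$, hence $a_1\sim b_2$; but $a_1b_2$ would be a third edge among the endpoints of $e_1$ and $e_2$, contradicting their independence. A symmetric argument gives $y(b_3)<y(w_2)$: otherwise $x(a_2)<x(w_2)<x(w_3)<x(b_3)$ together with $y(a_2)<y(w_2)<y(b_3)$ forces $w_2\in B(a_2,b_3)$, i.e.\ $a_2\sim b_3$, contradicting the independence of $e_2$ and $e_3$.

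Finally I would combine the two facts with the staircase order. We have $x(a_1)<x(w_1)<x(w_2)<x(w_3)<x(b_3)$ and, by the two claims, $y(b_3)<y(w_2)<y(a_1)$; hence $w_2\in B(a_1,b_3)$, so $a_1\sim b_3$, contradicting the independence of $e_1$ and $e_3$. This rules out three pairwise independent edges in $E^+$; the case of $E^-$ follows by applying the same argument to the reflected configuration $(x,y)\mapsto(-x,y)$ (which preserves boxes and witnesses and swaps $E^+$ with $E^-$), or equivalently by repeating the argument verbatim with ``lower-left/upper-right'' replaced by ``lower-right/upper-left''.

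The only real obstacle is spotting the right target box, namely $B(a_1,b_3)$ containing the \emph{middle} witness $w_2$, and recognizing that the two auxiliary claims about $y(a_1)$ and $y(b_3)$ are exactly what is needed to put $w_2$ inside it; once the staircase normalization from Lemma~\ref{lem:2sameSlope} is in place, each individual step is a one-line box-containment check using the defining property of a WRG edge (a box is witness-free precisely when the corresponding pair of vertices is a non-edge). In writing it up I would just make sure that at every step the two points spanning the invoked box are genuinely distinct, which is automatic because independent edges are vertex-disjoint, so that ``the box contains a witness'' does translate into ``these two vertices are adjacent''.
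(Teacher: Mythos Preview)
Your proof is correct and follows essentially the same route as the paper: order the three witnesses into a descending staircase via Lemma~\ref{lem:2sameSlope}, then exhibit an endpoint of $e_1$ in quadrant~II of the middle witness $w_2$ and an endpoint of $e_3$ in its quadrant~IV, forcing an adjacency that contradicts independence. The only cosmetic difference is that the paper uses the ``no shared witness'' clause of Lemma~\ref{lem:2sameSlope} to conclude that \emph{some} endpoint of $e_1$ lands in quadrant~II of $w_2$ (and symmetrically for $e_3$), whereas you pin down the specific endpoints $a_1$ and $b_3$ by direct adjacency checks against $b_2$ and $a_2$; your version is slightly more explicit but the underlying geometry is identical.
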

\begin{proof} Assume that three pairwise independent edges $e_1, e_2,
  e_3$ in $E^+$ are witnessed by $w_1,w_2,w_3$, respectively, with
  $x(w_1)<x(w_2)<x(w_3)$.  Then, by Lemma~\ref{lem:2sameSlope},
  $y(w_1)>y(w_2)>y(w_3)$.  By the same lemma at least one endpoint of
  $e_1$ is in the second quadrant of $w_2$ and at least one endpoint
  of $e_3$ is in its fourth quadrant, contradicting their independence. \hfill $\Box$
\end{proof} 

\begin{lemma}
  \label{lem:no3independent}
  A witness rectangle graph does not contain three pairwise independent edges.
\end{lemma}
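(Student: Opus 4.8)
The plan is to combine the two preceding structural lemmas. By Lemma~\ref{lem:no3independentSameSlope}, we already know that three pairwise independent edges cannot all lie in $E^+$, nor all in $E^-$. So it suffices to rule out the mixed case, where among three pairwise independent edges $e_1,e_2,e_3$ at least one has slope opposite to the other two. After relabeling we may assume two of them, say $e_1,e_2$, lie in $E^+$ and the third, $e_3$, lies in $E^-$ (the case two in $E^-$ and one in $E^+$ being symmetric under reflection).

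Next I would invoke Lemma~\ref{lem:2oppositeSlope} twice: since $e_3$ is independent from $e_1$ and independent from $e_2$, it must share a witness with each of them. So there is a witness $w$ common to $e_1$ and $e_3$, and a witness $w'$ common to $e_2$ and $e_3$. The key observation is that $e_1$ and $e_2$ are two independent edges \emph{of the same slope}, so by Lemma~\ref{lem:2sameSlope} they cannot share a witness; hence $w\neq w'$, and moreover the line through $w$ and $w'$ has negative slope (the "same slope" direction for $E^+$). Now $w$ and $w'$ are \emph{both} witnesses for $e_3\in E^-$, i.e.\ both lie in the box $B(e_3)$. I would then derive a contradiction by looking at how $w$ and $w'$ sit relative to the endpoints of $e_3=cd$: writing $c,d$ with $x(c)<x(d)$ (so, since $e_3\in E^-$, $y(c)>y(d)$), both $w$ and $w'$ lie strictly inside the box with $c$ at its top-left and $d$ at its bottom-right corner; but a segment of negative slope joining two interior points of that box, such as $ww'$, cannot "escape" to witness the positive-slope edges $e_1$ and $e_2$ in the way forced by Observation~\ref{obs:quadrants} without placing an endpoint of $e_1$ or $e_2$ into a quadrant of $c$ or $d$ that creates a chord among $\{e_1,e_3\}$ or $\{e_2,e_3\}$, contradicting independence.

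The main obstacle is the last step: turning the geometric picture ("$w,w'$ both inside $B(e_3)$, on a negatively sloped line, each also witnessing a positively sloped edge on opposite sides") into a clean contradiction. Concretely, I expect to argue that since $e_1$ lies in $E^+$ and is witnessed by $w$, the endpoints of $e_1$ occupy quadrants~I and~III of $w$; likewise the endpoints of $e_2$ occupy quadrants~I and~III of $w'$. Combining this with the requirement (from independence of $e_1,e_2$ via Lemma~\ref{lem:2sameSlope}, together with the forced negative slope of line $ww'$) and with the constraint that $c,d$ avoid quadrants~I and~III of $w$ and of $w'$, one is forced into a configuration where an endpoint of $e_1$ (or of $e_2$) dominates $c$ or $d$ appropriately, yielding a box containing $w$ or $w'$ and hence an edge that violates independence. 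I would organize this by fixing the relative $x$-order of $w,w'$ (negative-slope line gives the corresponding $y$-order), then casing on which side of $e_3$ the endpoints of $e_1$ and $e_2$ fall, and in each case exhibiting the offending chord. This mirrors the quadrant-pushing arguments already used in the proofs of Lemmas~\ref{lem:2sameSlope}, \ref{lem:2oppositeSlope}, and~\ref{lem:no3independentSameSlope}, so the same toolkit should suffice; the bookkeeping of quadrants is the only delicate part.
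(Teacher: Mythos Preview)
Your plan is headed in the right direction but takes a significant detour that the paper avoids entirely. You correctly identify that, by Lemma~\ref{lem:no3independentSameSlope}, at least two of the three edges have opposite slopes, and you correctly invoke Lemma~\ref{lem:2oppositeSlope} to get a shared witness. The divergence is in \emph{which} pair you focus on and what you do next.

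You choose the two edges of the \emph{same} slope ($e_1,e_2\in E^+$) as your ``base pair'' and then apply Lemma~\ref{lem:2oppositeSlope} twice, once for $(e_1,e_3)$ and once for $(e_2,e_3)$, obtaining two witnesses $w,w'$ and launching into a case analysis on their relative positions inside $B(e_3)$. The paper instead picks any pair of \emph{opposite} slope---say $ab\in E^+$ and $cd\in E^-$---applies Lemma~\ref{lem:2oppositeSlope} \emph{once} to get a single shared witness $w$, and then makes the decisive observation: since $ab$ has positive slope its endpoints fill quadrants~I and~III of $w$, and since $cd$ has negative slope its endpoints fill quadrants~II and~IV of $w$. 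All four quadrants of $w$ are now occupied by $\{a,b,c,d\}$, so \emph{every} other vertex---in particular each endpoint of the third edge $ef$---lies in some quadrant of $w$ and is therefore adjacent (via $w$) to the point in the opposite quadrant. That immediately contradicts the independence of $ef$ from both $ab$ and $cd$.

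You actually set up this very observation (``the endpoints of $e_1$ occupy quadrants~I and~III of $w$'') but do not notice that the endpoints of $e_3$ simultaneously occupy quadrants~II and~IV of the \emph{same} $w$, which finishes the proof on the spot. Your second application of Lemma~\ref{lem:2oppositeSlope}, the witness $w'$, the negative-slope line through $w,w'$, and the promised quadrant bookkeeping are all unnecessary. The approach you sketch could presumably be pushed through, but it trades a two-sentence argument for a multi-case one with, as you yourself note, ``delicate'' bookkeeping that you have not yet carried out.
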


\begin{proof}
  By Lemma~\ref{lem:no3independentSameSlope}, two edges $ab$ and $cd$
  of the three pairwise independent edges $ab$, $cd$, and $ef$ have
  opposite slopes.  By Lemma~\ref{lem:2oppositeSlope}, $ab$ and $cd$
  share a witness point $w$.  Every quadrant of $w$ contains one of
  the points $a$, $b$, $c$, or $d$, therefore both $e$ and $f$ must be
  adjacent to one of them, a contradiction.  \hfill $\Box$
%
%
%
\end{proof}


The preceding results allow a complete characterization of the trees that can
be realized as WRGs. An analogous result for rectangle-of-influence graphs was
given in \cite{LLMW98}.

\begin{theorem}
  \label{tree}
  A tree is representable as a witness rectangle graph if and only if it has no three
  independent edges.
\end{theorem}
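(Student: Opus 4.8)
The plan is to prove both directions of the characterization. The forward direction is immediate: by Lemma~\ref{lem:no3independent}, no witness rectangle graph — tree or otherwise — contains three pairwise independent edges, so in particular a tree representable as a WRG cannot have three independent edges. The substance of the theorem is the converse: given a tree $T$ with no three independent edges, I must exhibit a vertex point set $P$ and witness set $W$ with $\RG^+(P,W)\cong T$.

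First I would understand the structure of trees with no three independent edges (equivalently, no matching of size three). The standard fact is that such a graph has a vertex cover of size at most two — and for trees one can argue this directly: if $T$ has a matching of size at most two, then either $T$ has a vertex incident to all edges (a star $K_{1,m}$), or $T$ has exactly two "centers" covering all edges, which for a tree forces $T$ to be a \emph{double star} (two adjacent vertices $u,v$, each with some pendant leaves attached), or — checking small cases — possibly a short path or a "spider" with short legs. I expect the precise list to be: stars $K_{1,m}$, double stars, and the path $P_4$ on four vertices (which is a double star with one leaf on each side), plus paths $P_2, P_3$. So the core is: realize every star and every double star as a WRG.

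Next I would construct the drawings. For a star $K_{1,m}$ with center $c$ and leaves $\ell_1,\dots,\ell_m$: place $c$ at the origin, place all leaves far to the upper right along a strictly decreasing staircase (so $x(\ell_1)<\dots<x(\ell_m)$ and $y(\ell_1)>\dots>y(\ell_m)$, all coordinates positive), so that $B(\ell_i,\ell_j)$ is a tiny box disjoint from the region near $c$; then put a single witness $w$ just above and to the right of $c$, inside every box $B(c,\ell_i)$ but outside every box $B(\ell_i,\ell_j)$. This yields exactly the star edges. For a double star with centers $u,v$ and leaf sets $L_u, L_v$: place $u$ and $v$ with, say, $u$ lower-left and $v$ upper-right so $uv\in E^+$; hang $L_u$ as a decreasing staircase in the lower-left corner of $u$ and $L_v$ as a decreasing staircase in the upper-right corner of $v$, each staircase tightly clustered; then use one witness inside $B(u,v)$ that also lies in all boxes $B(u,\ell)$ for $\ell\in L_u$ and all boxes $B(v,\ell')$ for $\ell'\in L_v$ — placing the witness essentially at $u$'s and $v$'s shared "diagonal band" — while ensuring it falls outside the clustered leaf-leaf boxes and outside cross boxes $B(u,\ell')$, $B(\ell,v)$, $B(\ell,\ell')$. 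One may need a second witness, but a single well-placed witness in the band between $u$ and $v$ should suffice since every desired edge's box contains the segment $uv$ neighborhood.

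The main obstacle I anticipate is the careful verification in the double-star construction that no \emph{unwanted} edges appear: the box $B(\ell,\ell')$ for a leaf of $u$ and a leaf of $v$ is large (it spans from lower-left to upper-right), so it is crucial that the witnesses be placed strictly between $u$ and $v$ in a way that such a box, while geometrically containing $uv$, still misses $W$ — this requires the witness to sit in the open box $B(u,v)$ but to be separated from $u$ and from $v$ by the leaf clusters, i.e., the leaves of $u$ must "shield" $u$'s side and the leaves of $v$ must shield $v$'s side. Concretely, I would place the leaves of $u$ so their bounding region lies between $u$ and the witness along the diagonal, and symmetrically for $v$; then $B(\ell,\ell')$ with $\ell\in L_u$, $\ell'\in L_v$ has its lower-left corner at $\ell$ (which is already past the witness) so it misses $w$. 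Ironing out these inequalities — simultaneously getting all star/double-star edges present and all non-edges absent — is the routine-but-delicate heart of the argument; everything else is bookkeeping over the short list of trees with no size-three matching.
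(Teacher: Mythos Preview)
Your argument rests on a misreading of the paper's definition of \emph{independent edges}. Two edges are independent not merely when they share no endpoints (a matching), but when the subgraph induced on their four endpoints contains no third edge --- i.e., they form an induced $2K_2$. Consequently ``no three pairwise independent edges'' means ``no induced $3K_2$,'' which is a far weaker restriction than matching number at most two. Your deduction that such a tree has a vertex cover of size two, and is therefore a star or a double star, is false under the correct definition. For instance, the path $P_5=v_0v_1v_2v_3v_4$ has a matching of size two but no induced $3K_2$ (any two disjoint edges are adjacent via the middle vertex), yet it is neither a star nor a double star. More dramatically, the paper exhibits WRG trees of diameter up to six, whereas every double star has diameter at most three; your case list simply does not cover the theorem.

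Even within the double-star case your construction breaks: you place the leaves $L_u$ to the lower-left of $u$ and the witness inside $B(u,v)$, which lies to the upper-right of $u$; that witness then cannot lie in any box $B(u,\ell)$ with $\ell\in L_u$, so the edges $u\ell$ are unwitnessed. The paper's approach is different in kind: rather than classifying the trees abstractly, it enumerates, for each possible diameter $1,\dots,6$, the unique maximal tree with no induced $3K_2$ (any such tree is a subtree of one of these), and supplies an explicit drawing for each. To repair your proof you would need both the correct structural description of induced-$3K_2$-free trees and drawings for the diameter-$4$,$5$,$6$ cases.
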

\begin{proof}
\begin{figure}
\centering
  \includegraphics[width=1.8cm]{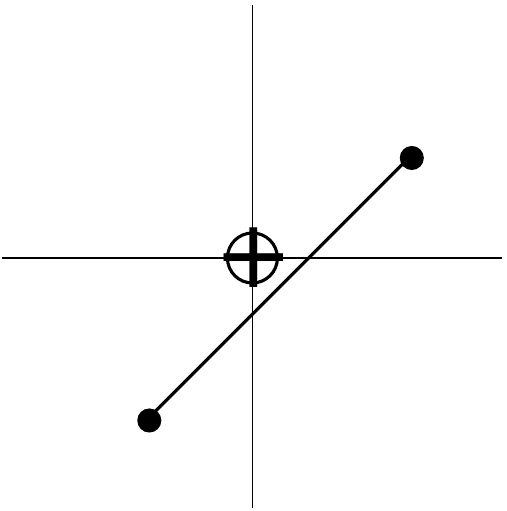}\hfil
  \includegraphics[width=1.8cm]{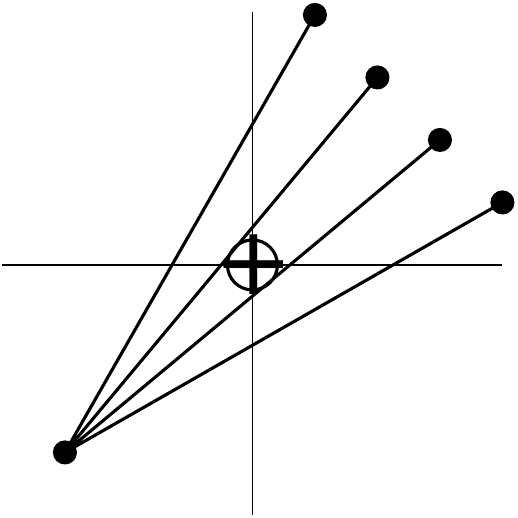}\hfil
  \includegraphics[width=2.7cm]{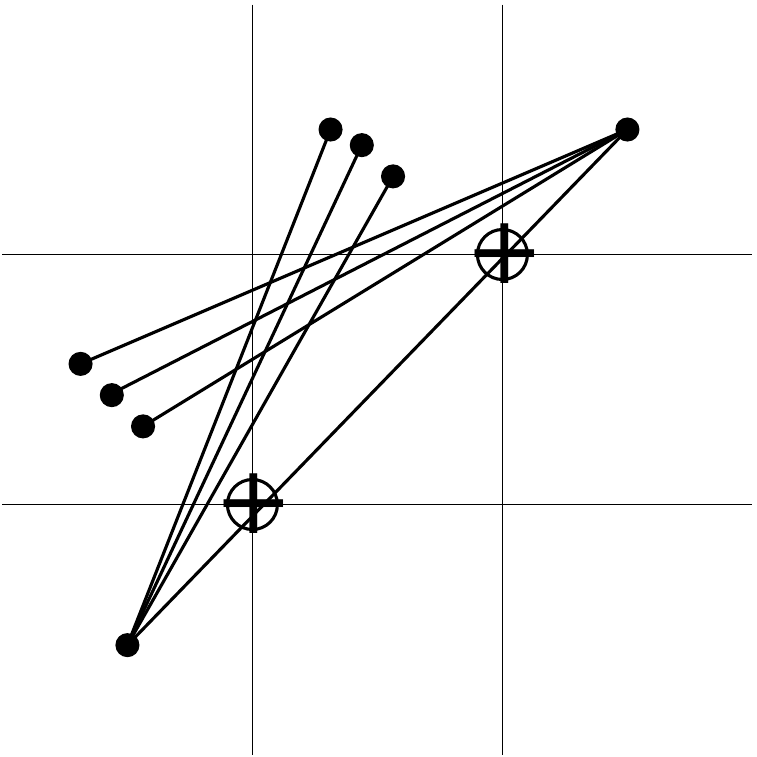}\hfil
  \includegraphics[width=3.6cm]{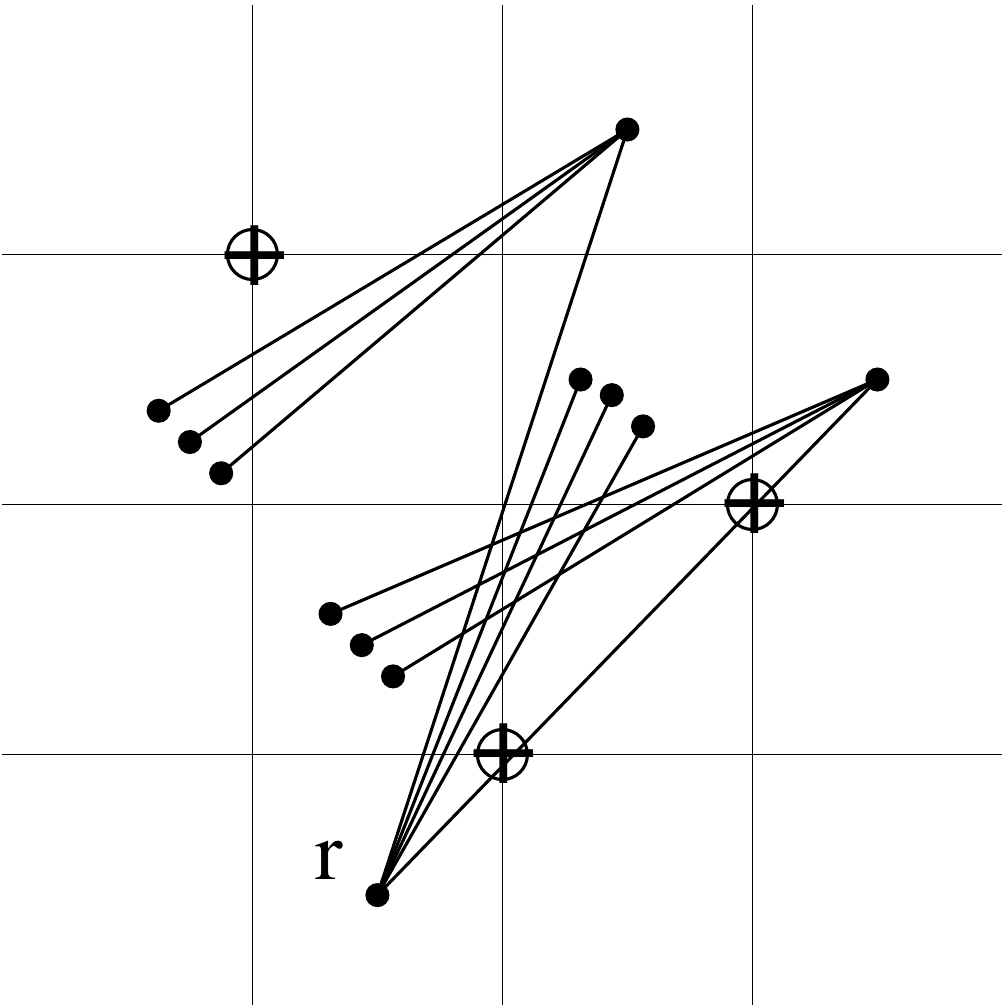}\\[2ex]
  \includegraphics[width=4.5cm]{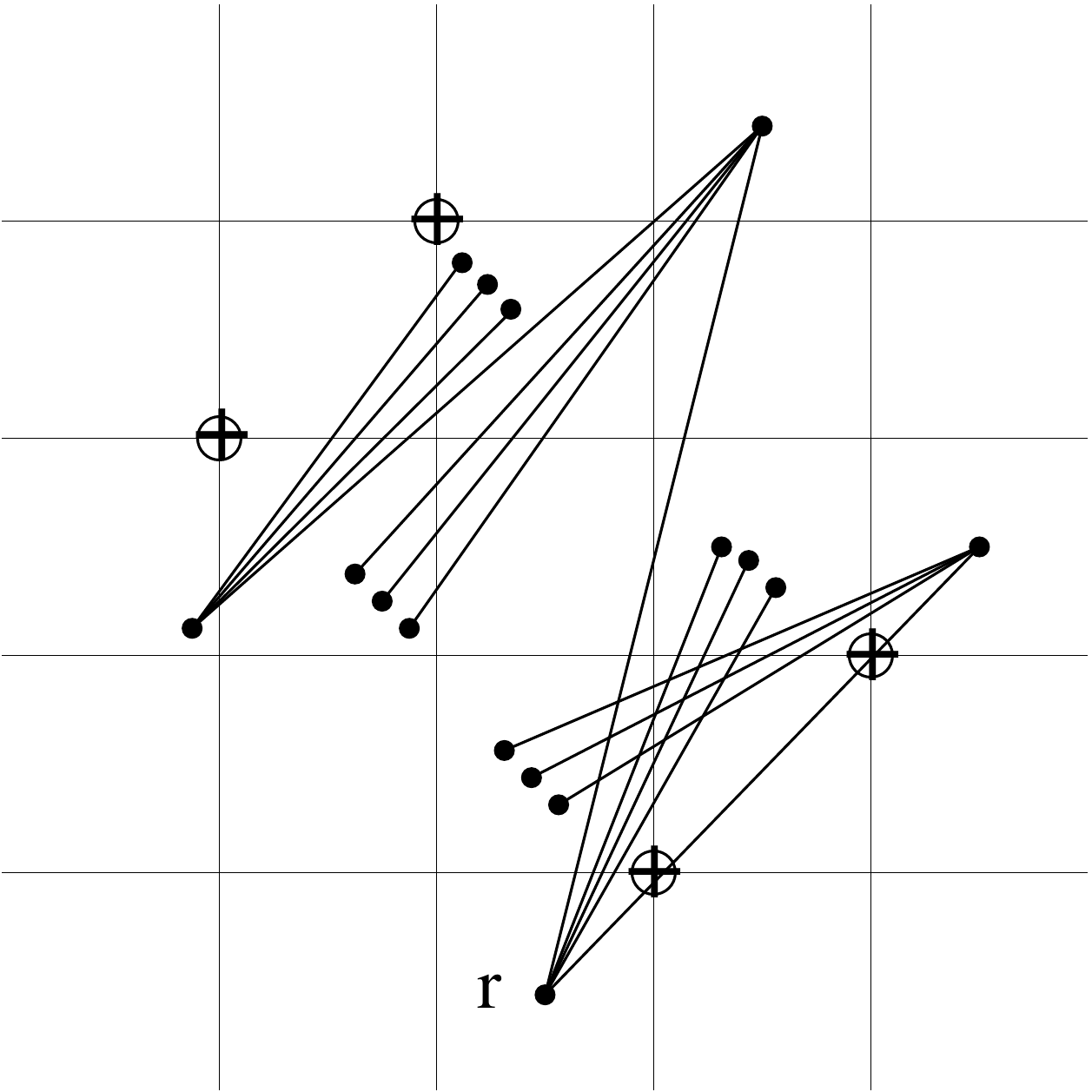}\hfil%
  \includegraphics[width=4.5cm]{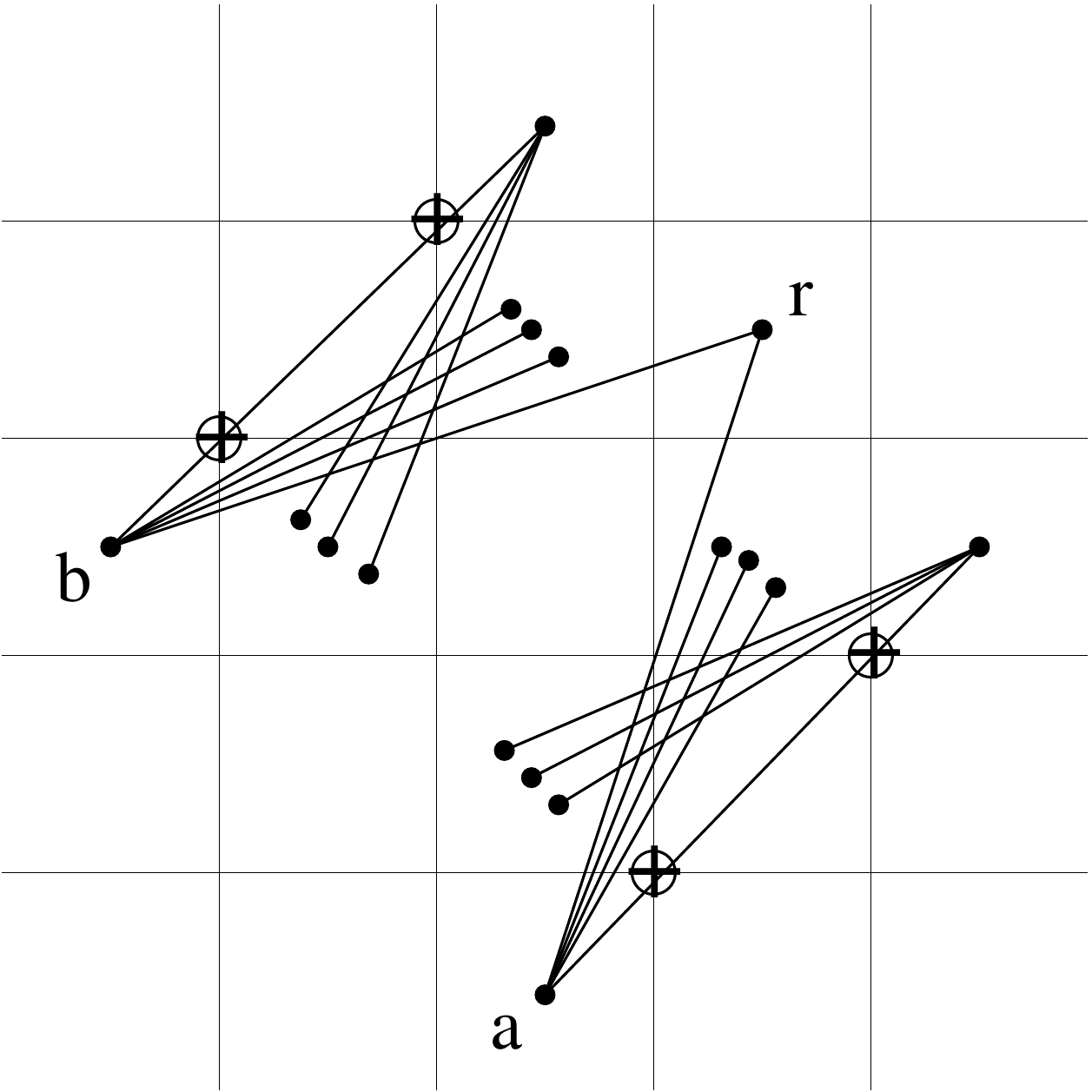}
  \caption{From left to right, All WRG trees of diameter 1 to 6.  Any number of vertices can
    be added in the cells containing three vertices.}
      \label{RGtree}
  \end{figure}
In Figure~\ref{RGtree}, all maximal WRG trees are
represented. They are ordered by their diameter from left to
right. Any number of vertices can be added in the cells where there
are at least three vertices.

As one can see in Figure~\ref{RGtree} (upper left), all trees of diameter one to three are realizable as WRG.

The unique maximal WRG tree of diameter four (Figure~\ref{RGtree} upper right) is obtained by checking all ways to add branches to a path of length four without increasing the diameter and without creating three independent edges. As one can see, branches of length one have been added to the interior vertices, adding a branch of length two would create three independent edges. Adding a branch incident to one of the end-vertices of the path would increase the diameter.

The unique maximal WRG tree of diameter five is constructed similarly (Figure~\ref{RGtree} bottom left). 
If $r$ is the root of the WRG tree, $r$ can have an arbitrary number of children, but at most two of its children can have children (otherwise there would be three independent edges), and at most one of its grandchildren can have children (otherwise the tree would have diameter six).

The unique maximal WRG tree of diameter six is constructed similarly
to the case of diameter four (Figure~\ref{RGtree} bottom right).  If
$r$ is the root of the WRG tree, $r$ can have at most two children $a$
and $b$ (otherwise there would be three independent edges), an
arbitrary number of grandchildren adjacent to $a$ and $b$, and $a$
and $b$ can have an arbitrary number of grandchildren all adjacent to
exactly one of their children (otherwise there would be three
independent edges). \mbox{}\hfill $\Box$
\end{proof}

Lemma~\ref{lem:no3independent} immediately implies the following
structural result that is far from a complete characterization, yet
narrows substantially the class of graphs representable as WRGs.

\begin{theorem} \label{thm:rectangleStructural} A witness rectangle graph has at most two
  non-trivial connected components.  If there are exactly two, each
  has diameter at most three.  If there is one, its diameter is at
  most six.
\end{theorem}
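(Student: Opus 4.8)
Everything will follow from Lemma~\ref{lem:no3independent}: a WRG contains no three pairwise independent edges. In each of the three assertions the strategy is identical — assume the conclusion fails, extract three edges from the alleged structure, and verify they are pairwise independent, contradicting the lemma. The recurring elementary fact I will use is that on a \emph{shortest} path $v_0 v_1 \cdots v_k$ one has $d(v_i,v_j)=|i-j|$, so $v_i$ and $v_j$ are non-adjacent whenever $|i-j|\ge 2$.

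First, for the number of non-trivial components, suppose for contradiction that $G$ has three non-trivial connected components $C_1,C_2,C_3$, and pick an edge $e_i$ in $C_i$. For $i\ne j$ the edges $e_i$ and $e_j$ share no vertex (they lie in different components), and no edge of $G$ joins an endpoint of $e_i$ to an endpoint of $e_j$ (such an edge would merge $C_i$ and $C_j$); hence $e_i,e_j$ are independent. So $e_1,e_2,e_3$ are pairwise independent, contradicting Lemma~\ref{lem:no3independent}. Thus $G$ has at most two non-trivial components (it may of course have arbitrarily many isolated vertices, which the statement does not constrain).

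Second, suppose $G$ has exactly two non-trivial components and one of them, $C$, has diameter at least $4$; then $C$ contains a shortest path $v_0 v_1 v_2 v_3 v_4$. Among $\{v_0,v_1,v_3,v_4\}$ the only adjacent pairs are $v_0v_1$ and $v_3v_4$ (all other index differences are $\ge 2$), so these two edges are independent. Any edge $e$ of the other non-trivial component is independent from both of them by the argument of the previous paragraph, giving three pairwise independent edges — contradiction. Hence each of the two components has diameter at most $3$. Third, suppose the single non-trivial component $C$ has diameter at least $7$; then $C$ contains a shortest path $v_0 v_1 \cdots v_7$. Take $e_1=v_0v_1$, $e_2=v_3v_4$, $e_3=v_6v_7$. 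For two of these edges at positions differing by at least $3$ along the path, the closest pair of their endpoints has index difference at least $2$, hence is non-adjacent, and no other endpoint pair is adjacent either, so the two edges are independent; since the chosen positions $0,3,6$ are pairwise at distance at least $3$, all three edges are pairwise independent — contradiction. Hence the diameter is at most $6$.

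The only thing needing care is the (repeated, but completely routine) check that the edges picked off a shortest path are pairwise independent, which is immediate from the shortest-path distance identity above; there is no genuine obstacle, and the theorem is a direct corollary of Lemma~\ref{lem:no3independent}. \hfill $\Box$
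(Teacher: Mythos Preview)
Your proof is correct and matches the paper's approach: the paper states the theorem as an immediate consequence of Lemma~\ref{lem:no3independent} without spelling out the details, and you have supplied precisely the routine extraction of three pairwise independent edges (from three components, from a $P_5$ plus a second component, and from a $P_8$, respectively) that the phrase ``immediately implies'' is pointing to.
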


Note that the bounds on the diameter are tight: the tree in
Figure~\ref{RGtree}~(right) has diameter six and it is easy to draw the
disjoint union of two three-link paths as a WRG, by merging all the vertices in the same cell and removing $r$ from  Figure~\ref{RGtree}~(right), for example.

\begin{corollary} 
The condition of Theorem~\ref{thm:rectangleStructural} is necessary to characterize a 
WRG but not sufficient. In other words, not all connected graphs with no three independent edges are realizable as a witness rectangle graph.
\end{corollary}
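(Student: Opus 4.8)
The plan is to exhibit a single connected graph $G$ that has no three pairwise independent edges but is not representable as a witness rectangle graph; by Lemma~\ref{lem:no3independent} such a $G$ witnesses that the structural conditions of Theorem~\ref{thm:rectangleStructural} (at most two non-trivial components, with the stated diameter bounds) are necessary but not sufficient. The first step is cheap: ``no three pairwise independent edges'' is the same as ``no induced $3K_2$'', so one only has to verify a finite condition on the chosen $G$, and because $G$ is connected it automatically has a single non-trivial component whose diameter is at most six (a shortest path of length seven would already supply three pairwise independent edges $e_0,e_3,e_6$). Hence the whole content of the corollary lies in the non-realizability of $G$.

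To prove $G$ is not a WRG I would argue by contradiction, assuming $G=\RG^+(P,W)$ and analysing the relative positions of a few structurally distinguished vertices of $G$ --- typically the vertices of largest degree together with the dominance-extremal points of $P$. The basic geometric lever is a strengthening of Observation~\ref{obs:quadrants}: if $u$ is a common neighbour of two \emph{non}-adjacent vertices $x,y$, then $x$ and $y$ cannot lie in two opposite quadrants of $u$, since otherwise $B(x,y)\supseteq B(u,x)\cup B(u,y)$ and the witness of the edge $ux$ would already lie in $B(x,y)$, forcing $x\sim y$. More generally, whenever the combinatorics of $G$ forces the box of some edge to be contained in the box of some non-edge we get an immediate contradiction. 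Iterating these constraints over the many non-edges of $G$ pins down, up to the obvious axis symmetries, the quadrant/dominance arrangement of the distinguished vertices to a small number of configurations, and in each surviving configuration a prescribed non-edge is forced to be present. For the cases that box-containment alone does not close, I would invoke the finer structure of Lemmas~\ref{lem:2sameSlope} and~\ref{lem:2oppositeSlope}: two independent edges of equal slope neither cross nor share a witness and have witnesses on a line of the opposite slope sign, while two independent edges of opposite slope must share a witness --- each of which severely restricts how the independent edges of $G$, and their witnesses, can be laid out.

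The hard part is selecting $G$ so that this case analysis actually terminates. Witness rectangle graphs turn out to be a surprisingly rich class --- complete bipartite graphs, short cycles, books, and graphs glued together from a couple of triangles are all realizable, often with very few witnesses, via placements in which one vertex is the dominance-minimum and another the dominance-maximum of $P$, which defeats naive containment arguments. Hence $G$ should be chosen so that too many of its vertices are simultaneously forced into incompatible extremal or quadrant roles; one natural source of candidates is to start from the maximal WRG trees of Figure~\ref{RGtree} and add an edge (or merge vertices) in a way that keeps the graph free of three independent edges but destroys every valid placement. Finally I would note, consistent with the corollary's phrasing, that a complete characterization of WRGs with a single non-trivial component is still missing, so exhibiting one such $G$ is exactly what is called for here.
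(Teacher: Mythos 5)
There is a genuine gap: you never actually produce the counterexample. The entire content of the corollary is the existence of a specific connected graph with no three pairwise independent edges that is not realizable as a WRG, and your proposal only describes a strategy for how one \emph{would} rule out realizability (box-containment of non-edges, the quadrant strengthening of Observation~\ref{obs:quadrants}, Lemmas~\ref{lem:2sameSlope} and~\ref{lem:2oppositeSlope}), while explicitly deferring the choice of $G$ (``the hard part is selecting $G$ so that this case analysis actually terminates''). As it stands nothing is proved; moreover, the local geometric case analysis you outline is exactly the kind of argument that the paper's own observation about the richness of WRGs (complete bipartite graphs, joins of staircase graphs, etc.) suggests is hard to close by hand, and you give no reason to believe it terminates for any particular candidate.

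The paper avoids this by using a \emph{global} obstruction instead of local geometry: Theorem~\ref{DominationNumberWRG} shows that any WRG with non-trivial connected components has domination number at most four. The explicit counterexample is the $c$-supernova --- $K_c$ with one pendant vertex attached to each of its $c$ vertices --- which has no two (hence certainly no three) pairwise independent edges, is connected, but has domination number $c$, since a dominating set must hit each pendant edge. Taking $c\ge 5$ immediately yields a graph satisfying the necessary condition of Theorem~\ref{thm:rectangleStructural} that cannot be drawn as a WRG. If you want to salvage your approach, the missing ingredients are (i) a concrete graph and (ii) a realizability obstruction that is checkable without an open-ended placement case analysis; the domination-number bound is precisely such an invariant, and your proposal makes no use of it.
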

\begin{proof}
The graph formed from the complete graph $K_c$ by attaching an additional degree-one vertex adjacent to each of the $c$ vertices (see Figure~\ref{supernova}), which we call a \emph{$c$-supernova}, does not have \emph{two} independent edges.
\begin{figure}
\centering
\includegraphics[scale=0.5]{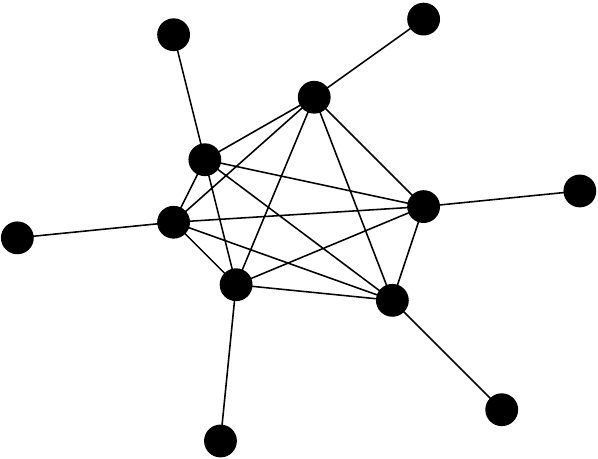}
\caption{A six-supernova.}
\label{supernova}
\end{figure}
Nevertheless, by Theorem~\ref{DominationNumberWRG}, a supernova with
$c > 4$ is not realizable as a WRG.  Indeed, to obtain a dominating
set of this graph, one must pick one vertex incident to each of the
$c$ edges incident to a leaf.  Therefore the domination number is at
least $c$ (actually it is $c$), which implies that graph is not
drawable as a WRG, by Theorem~\ref{DominationNumberWRG}.  \hfill
$\Box$
\end{proof}

In fact, a simple extension of the above argument yields a family of
graphs not realizable as WRGs.  Start with any graph with at least
five vertices and attach five edges to five of its vertices, as above.
The resulting graph (which may or may not have three independent
edges) has domination number at leave five and is therefore not
realizable.

\section{Two Connected Components}

In this section we define a subclass of witness rectangle graphs, called \emph{staircase graphs}.  We argue that a WRG with precisely two non-trivial
connected components
has a very rigid structure.  Namely, each of its non-trivial connected
components is isomorphic to a staircase graph.

\begin{definition}
  A \emph{staircase graph of type~IV} is a witness rectangle graph,
  such that the witnesses form an ascending chain (i.e., for every
  witness, other witnesses lie only in its quadrants I~and~III) and
  all the vertices lie above the chain (i.e., quadrant~IV of every
  witness is empty of vertices); refer to Figure~\ref{RGstaircase}.
\begin{figure}
\centering
  \includegraphics[width=4cm]{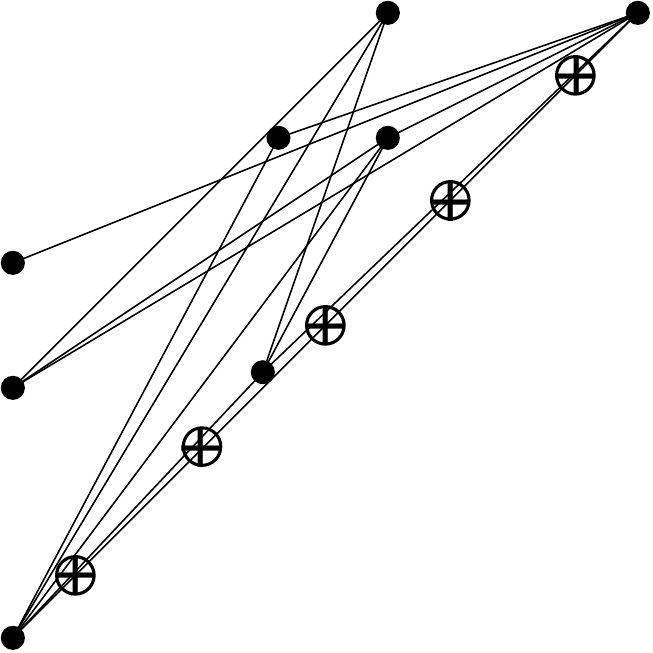}
   \caption{Staircase graph of type~IV.}
   \label{RGstaircase}
\end{figure}

  Staircase graphs of types~I,~II, and~III are defined analogously;
  they are rotated versions of the above.  The type of the staircase
  graph is determined by which quadrant of all witnesses is empty of
  vertices.
\end{definition}

Note that an induced subgraph of a staircase graph is a staircase
graph (of the same type)---a property that immediately follows from
the definition and that we will find useful below.

\begin{lemma}
  \label{lem:staircase-splits}
  A combinatorial graph $G=(V,E)$, isomorphic to a staircase graph,
  is a join $G_1+G_2$ if and only if it can be realized
  as a staircase graph of type~IV with some witness containing points
  corresponding to $V(G_1)$ in quadrant~I and points corresponding to
  $V(G_2)$ in quadrant~III.
\end{lemma}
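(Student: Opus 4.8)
The plan is to pass from the geometry to a combinatorial model in terms of intervals on the integers, where both implications become nearly mechanical. First I would set up the following dictionary for type~IV staircase graphs (by rotating the plane through a suitable multiple of $90^\circ$ we may always assume a given staircase graph is of this type). List the witnesses $w_1,\dots,w_m$ in increasing order of $x$-coordinate, which, since they form an ascending chain, is also increasing order of $y$-coordinate, and to each vertex $p$ attach the interval $I_p:=[b(p),a(p)]$, where $b(p)$ is the number of witnesses strictly to the left of $p$ and $a(p)$ the number strictly below it. I claim that the ``no vertex in quadrant~IV'' condition is precisely $b(p)\le a(p)$ for all $p$, that $p\sim q$ holds iff $I_p\cap I_q=\varnothing$, and that $p$ lies in quadrant~I, II, or~III of $w_j$ according as $b(p)\ge j$, $b(p)<j\le a(p)$, or $a(p)<j$. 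Conversely, from any family of intervals $[\ell_p,r_p]$ with integer endpoints and $0\le\ell_p\le r_p\le m$ I can reconstruct a type~IV staircase realization: put $w_i:=(i,i)$ and place $p$ at $(\ell_p+\varepsilon_p,\,r_p+\delta_p)$ for small, pairwise distinct perturbations; one checks the witnesses form an ascending chain, each has an empty quadrant~IV because $\ell_p\le r_p$, the points are in general position, and $(b(p),a(p))=(\ell_p,r_p)$, so adjacency is again interval-disjointness.

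Granting the dictionary, the ``if'' direction is immediate. If a type~IV realization of $G$ has a witness $w$ with no vertex in its quadrant~II, write $A$ and $B$ for the groups of vertices in its quadrants~I and~III, so $A\sqcup B$ is the whole vertex set; then for $u\in A$ and $v\in B$ we get $x(v)<x(w)<x(u)$ and $y(v)<y(w)<y(u)$, hence $w\in B(u,v)$, hence $u\sim v$. So every $A$--$B$ pair is an edge and $G=G[A]+G[B]$, with $A$ corresponding to $V(G_1)$ and $B$ to $V(G_2)$. For the ``only if'' direction I start from any type~IV realization of $G=G_1+G_2$, with its interval family $\{I_p\}$; since all $V(G_1)$--$V(G_2)$ edges are present, every $V(G_1)$-interval is disjoint from every $V(G_2)$-interval. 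The trick is then to translate all of the $V(G_1)$-intervals rigidly far to the right by one common amount: relations within $V(G_1)$ and within $V(G_2)$ are untouched, the cross pairs were and remain disjoint, so the new family still represents $G$, and now every $V(G_1)$-interval lies entirely to the right of every $V(G_2)$-interval. Renaming endpoints to consecutive integers preserves order, intersections, and the separation; feeding the result into the reconstruction and taking $j$ to be one more than the largest endpoint used by $V(G_2)$, the witness $w_j$ has all $V(G_1)$-vertices in quadrant~I, all $V(G_2)$-vertices in quadrant~III, and nothing in quadrant~II, which is exactly the realization the lemma asks for.

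The one place where there is real work to do is establishing the dictionary: the case analysis (on which corner of $B(p,q)$ is the lower-left one) proving that adjacency in a type~IV staircase graph coincides with disjointness of the intervals $[b(p),a(p)]$, plus the routine but unavoidable verification that the reconstruction really yields a staircase graph of type~IV with no repeated coordinates. Everything downstream of that is bookkeeping, and the heart of the ``only if'' direction is the translation step, which converts an arbitrary realization into one with a clean separating witness and succeeds only because the join hypothesis has already forced the relevant intervals to be pairwise disjoint. (When $G_1$ or $G_2$ is empty the statement holds vacuously on both sides, so I would assume both are non-empty; this also guarantees $1\le j\le m$.)
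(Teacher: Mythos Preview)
Your argument is correct, but for the ``only if'' direction it takes a genuinely different and considerably longer route than the paper. The paper's proof is essentially two lines: since induced subgraphs of staircase graphs are staircase graphs, $G_1$ and $G_2$ each have type~IV staircase realizations $\RG^+(P_1,W_1)$ and $\RG^+(P_2,W_2)$; now introduce a fresh witness $w$, drop a scaled copy of the first realization into quadrant~I of $w$ and a scaled copy of the second into quadrant~III, and observe that the combined picture is still a type~IV staircase graph and that $w$ forces every cross edge. No interval model, no translation, no renaming of endpoints. Your approach instead builds the full dictionary between type~IV staircase realizations and integer-interval families and then does the separation in interval space; this is more machinery than the lemma needs, but it is not wasted effort---you have essentially pre-proved the later result (Lemma~\ref{lem:staircase-interval}) that staircase graphs are exactly co-interval graphs. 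What the paper's approach buys is brevity and self-containment at this point in the exposition; what yours buys is a reusable combinatorial model that makes both this lemma and the co-interval characterization fall out of the same translation trick. Your ``if'' direction, on the other hand, is the same as the paper's.
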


\begin{proof}
  Suppose $G=G_1+G_2$ is realizable as a staircase graph. 
  The combinatorial graphs $G_1$ and $G_2$ are
  isomorphic to staircase graphs $\RG^+(P_1,W_1)$ and $\RG^+(P_2,W_2)$
  of type~IV, which can be obtained, for example, by considering any
  realization of $G$ as a staircase graph of type~IV and dropping the
  points corresponding to $V(G_2)$ and $V(G_1)$, respectively.  Create
  a staircase graph of type~IV isomorphic to $G$ by placing a copy of
  $\RG^+(P_1,W_1)$ in quadrant~I of a new witness~$w$ and a copy of
  $\RG^+(P_2,W_2)$ in its quadrant~III.

  Conversely, given a staircase graph of type~IV isomorphic to $G$ such that some
  of its vertices (call the set $P_1$) are in the first quadrant of a
  witness $w \in W$, and the remaining vertices (call them $P_2$) are
  in its third quadrant, it is easily checked that $G=G_1+G_2$, where
  $G_1$ and $G_2$ are the subgraphs of $G$ induced by (the sets of
  vertices of $G$ corresponding to) $P_1$ and $P_2$, respectively.  \hfill $\Box$
\end{proof}

\begin{lemma}
  \label{quadEmpty}
  In a witness rectangle graph, if witness $w$ has no vertices in one of its quadrants,
  any witness in the empty quadrant is redundant.
\end{lemma}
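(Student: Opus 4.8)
The plan is to show that the witness $w$ and the one in its empty quadrant already ``dominate'' each other as far as creating edges goes, so deleting the latter cannot change $G$. First I would set up notation: assume without loss of generality that $w$ has no vertices in its quadrant~IV, and let $w'$ be any witness lying in quadrant~IV of $w$ (i.e.\ $x(w') > x(w)$ and $y(w') < y(w)$). I want to prove that for every edge $e = xy \in E$ witnessed by $w'$, the box $B(x,y)$ also contains $w$; this shows $w'$ is redundant in the sense that removing it from $W$ leaves $\RG^+(P,W)$ unchanged, contradicting minimality unless $w'$ is not present.

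The key geometric step is the following. Since $w' \in B(x,y)$, the points $x$ and $y$ occupy two opposite quadrants of $w'$; say $x$ is in quadrant~III of $w'$ and $y$ is in quadrant~I of $w'$ (the other case, quadrants~II and~IV, is symmetric and handled the same way). Then $x$ lies to the lower-left of $w'$, hence also to the lower-left of $w$ in the $x$-coordinate sense we need — more precisely, $x(x) < x(w') $; but I must be careful because $w'$ is only in quadrant~IV of $w$, so $x(w) < x(w')$ and $y(w) > y(w')$. The claim is that $x$ must in fact lie in quadrant~III of $w$ as well: its $x$-coordinate satisfies $x(x) < x(w') $, and I need $x(x) < x(w)$. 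This is where the emptiness hypothesis enters. Suppose instead $x(x) \ge x(w)$; combined with $y(x) < y(w')< y(w)$, the point $x$ would then lie in quadrant~IV of $w$ (or on its boundary, which is excluded by general position), contradicting that quadrant~IV of $w$ is empty of vertices. Hence $x(x) < x(w)$, and since also $y(x) < y(w') < y(w)$, the point $x$ lies in quadrant~III of $w$. A symmetric argument — using that $y$ is in quadrant~I of $w'$, so $y(y) > y(w') $ and $x(y) > x(w')> x(w)$, and that $w$'s quadrant~IV being empty forces $y(y) > y(w)$ — places $y$ in quadrant~I of $w$. Therefore $w \in B(x,y)$, as desired.

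It remains to assemble the argument: every edge witnessed (solely) by $w'$ is also witnessed by $w$, so $w'$ is not needed to realize any edge of $G$; by the minimality assumption on $W$ stated at the start of Section~\ref{section:rectangles}, $w'$ cannot be in $W$, i.e.\ the empty quadrant of $w$ contains no witness. I expect the main obstacle to be bookkeeping the four quadrant cases cleanly — which quadrant of $w$ is assumed empty, and which pair of opposite quadrants of $w'$ the endpoints $x,y$ fall into — and making sure the two subcases really are symmetric so that a single ``without loss of generality'' suffices; the core one-line geometric observation (emptiness of a quadrant of $w$ pushes an endpoint of any $w'$-edge into the diagonally opposite quadrant of $w$) is straightforward once the picture is fixed.
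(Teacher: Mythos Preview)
Your proof is correct and follows essentially the same argument as the paper's (the paper chooses quadrant~II rather than~IV as the empty one, but the reasoning is identical: any edge witnessed by $w'$ is forced, via the emptiness hypothesis, to have its endpoints in opposite quadrants of $w$ as well). One small refinement: the quadrant~II/IV subcase you dismiss as ``symmetric'' is in fact impossible---quadrant~IV of $w'$ lies inside quadrant~IV of $w$ and hence contains no vertices---which the paper observes at the outset to dispose of that case immediately.
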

\begin{proof}
  Suppose quadrant~II of $w$ contains a witness $w'$ but no vertices.
  Hence quadrant~II of $w'$ is empty of vertices as well.  Suppose
  $w'$ witnesses edge $ab$, and $a$ and $b$ are in its quadrants
  I~and~III respectively. (They cannot lie in quadrants II~and~IV, as
  quadrant~II is empty of vertices.)  As quadrant~IV of $w$ is
  included in quadrant~IV of $w'$, $a$ and $b$ must be in quadrants~I
  and III of $w$, respectively, as well.  Therefore $w$ witnesses
  $ab$.  Since this argument applies to all edges witnessed by $w'$,
  $w'$ is redundant.  \hfill $\Box$
\end{proof}

\begin{lemma}
  \label{quadEmpty2}
  In a witness rectangle graph, if no witness has a vertex in its quadrant~IV, then the
  graph is a staircase graph of type~IV, possibly after removing some
  redundant witnesses.
\end{lemma}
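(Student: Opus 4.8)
The plan is to pass from $W$ to a cheaper witness set. Write $G=\RG^{+}(P,W)$ and let $W^{*}\subseteq W$ be an inclusion-minimal subset with $\RG^{+}(P,W^{*})=G$. Since adding witnesses only adds edges, $\RG^{+}(P,W')=G$ for every $W'$ with $W^{*}\subseteq W'\subseteq W$; hence the witnesses of $W\setminus W^{*}$ may be deleted one at a time, each redundant at the moment of deletion, and it suffices to show that $W^{*}$ realizes $G$ as a staircase graph of type~IV. As $W^{*}\subseteq W$, every witness of $W^{*}$ still has quadrant~IV empty of vertices, so by the definition of a type~IV staircase graph it remains only to prove that the witnesses of $W^{*}$ form an ascending chain, i.e.\ that of any two of them each lies in quadrant~I or~III of the other.

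The key step is a geometric claim: \emph{if $w,w'\in W$ are such that $w'$ lies in quadrant~II of $w$ (equivalently, $w$ lies in quadrant~IV of $w'$), then every edge witnessed by $w$ is also witnessed by $w'$}. To prove it, take an edge $ab$ with $w\in B(a,b)$. Because $w$ is strictly interior to $B(a,b)$ while $a$ and $b$ are its opposite corners, $a$ and $b$ lie in opposite quadrants of $w$; since quadrant~IV of $w$ contains no vertex, they lie in quadrants~I and~III, say $a$ in quadrant~I and $b$ in quadrant~III of $w$. Then $x(b)<x(w)<x(a)$, $y(b)<y(w)<y(a)$, and $w'$ in quadrant~II of $w$ gives $x(w')<x(w)$ and $y(w')>y(w)$. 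If $x(b)>x(w')$ held, then $b$ would lie in quadrant~IV of $w'$ (since $y(b)<y(w)<y(w')$), contradicting the hypothesis; hence $x(b)<x(w')$. Symmetrically, if $y(a)<y(w')$ held, then $a$ would lie in quadrant~IV of $w'$; hence $y(a)>y(w')$. (General position rules out the equality cases.) Combining, $x(b)<x(w')<x(a)$ and $y(b)<y(w')<y(a)$, so $w'\in B(a,b)$, i.e.\ $w'$ witnesses $ab$.

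It follows that $W^{*}$ is an ascending chain: otherwise some witness of $W^{*}$ lies in quadrant~II of another, and the claim shows the latter is redundant within $W^{*}$ (its edges being witnessed by the former, which is also in $W^{*}$), contradicting inclusion-minimality. Hence $G=\RG^{+}(P,W^{*})$ is a staircase graph of type~IV, proving the lemma. (If one prefers to avoid minimality, the same $W^{*}$ can be described directly as the set of minimal elements of the strict partial order in which $u\prec v$ when $u$ lies in quadrant~II of $v$: this set is an antichain, hence an ascending chain, and every witness not in it is $\prec$-above some element of $W^{*}$, which therefore witnesses all of its edges.)

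The only real obstacle is the geometric claim; everything else is bookkeeping. The claim itself is a short case analysis, but it uses the emptiness of quadrant~IV of \emph{both} witnesses in an essential way, together with the general-position assumption to keep all the inequalities strict.
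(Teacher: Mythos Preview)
Your proof is correct and follows essentially the same route as the paper: pass to a minimal witness set, then argue that any two remaining witnesses related by ``one lies in quadrant~II of the other'' would make one of them redundant, so the minimal set must form an ascending chain. The only difference is cosmetic: the paper cites Lemma~\ref{quadEmpty} (which says that a witness lying in the empty quadrant of another is redundant) and applies it with the roles $(w',w)$ and quadrant~IV, whereas you re-prove that special case inline as your ``geometric claim.'' Your inline argument is in fact a bit more explicit than the paper's terse proof of Lemma~\ref{quadEmpty}, but the content is the same.
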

\begin{proof}
  First remove any redundant witnesses, if present.  Now apply
  Lemma~\ref{quadEmpty}, to the possibly smaller, new witness set, to
  conclude that every remaining witness has its quadrant~IV empty of
  witnesses as well.  Therefore the remaining witnesses form an
  ascending chain and all vertices lie above it, as in the definition
  of a staircase graph of type~IV.  \hfill $\Box$
\end{proof}

Of course, if the empty quadrant in the above lemma is not~IV but I,
II, or~III, we get a staircase graph of the corresponding type.

\begin{theorem}
  \label{staircase}
  In a witness rectangle graph with two non-trivial connected
  components, each component is isomorphic to a staircase graph.
  Conversely, the disjoint union of two graphs representable as
  staircase graphs is isomorphic to a witness rectangle graph.
\end{theorem}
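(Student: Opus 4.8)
\textbf{Proof plan for Theorem~\ref{staircase}.}
The plan is to prove the two directions separately, using the structural lemmas already in hand. For the forward direction, suppose $G=\RG^+(P,W)$ has exactly two non-trivial connected components $C_1$ and $C_2$. The key observation is that any edge of $C_1$ and any edge of $C_2$ are independent (they share no vertices, being in different components, and no vertex of one is adjacent to a vertex of the other). By Lemma~\ref{lem:no3independent} there are no three pairwise independent edges, so picking one edge from each component already forces a strong restriction: for every witness $w$, the edges it witnesses cannot be spread across both components in a way that produces three mutually independent edges. The main work is to show that, after discarding redundant witnesses, every remaining witness has one of its four quadrants empty of vertices, so that Lemma~\ref{quadEmpty2} (and its rotated analogues) applies to each component separately and yields a staircase representation.

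To carry this out I would argue as follows. Fix an edge $e_1\in E(C_1)$ and an edge $e_2\in E(C_2)$; by Lemma~\ref{lem:no3independent} these two edges, together with any third edge, cannot all be independent. First handle the possibility that $e_1$ and $e_2$ have the same slope sign: by Lemma~\ref{lem:2sameSlope} they cannot cross or share a witness and their witnesses' connecting line has the opposite slope; pushing this, as in the proof of Lemma~\ref{lem:no3independentSameSlope}, shows that the two components occupy, roughly, disjoint "staircase strips". If instead $e_1$ and $e_2$ have opposite slopes, Lemma~\ref{lem:2oppositeSlope} forces them to share a witness $w$, and then every quadrant of $w$ contains an endpoint of $e_1$ or $e_2$; any further edge must be adjacent to one of these four endpoints, which pins down the location of all remaining vertices relative to $w$. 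In either case one extracts a witness (or can relocate to one) that separates $P$ into the two groups $V(C_1)$ and $V(C_2)$ sitting in opposite quadrants, with the other two quadrants empty of vertices. Lemma~\ref{quadEmpty2} then tells us the whole graph, after removing redundant witnesses, is a staircase graph (of some type), and restricting to a single component gives a staircase graph of the same type by the induced-subgraph remark following the definition. Hence each $C_i$ is isomorphic to a staircase graph.

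For the converse, suppose $G_1$ and $G_2$ are each isomorphic to staircase graphs; realize $G_1$ as a type~IV staircase graph $\RG^+(P_1,W_1)$ and $G_2$ likewise as $\RG^+(P_2,W_2)$. The idea is to place the two realizations far apart in the plane so that no box $B(x,y)$ with $x\in P_1$, $y\in P_2$ contains any witness, while every box internal to one realization keeps its witness. Concretely, translate the realization of $G_2$ so that it lies entirely in, say, the second quadrant of every witness and vertex of the realization of $G_1$ and conversely (this is possible because each realization occupies a bounded region and we have two free quadrant directions); then any box spanning the two realizations is witness-free, so there are no edges between $V(P_1)$ and $V(P_2)$, and each realization retains all and only its internal edges. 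The resulting WRG is exactly the disjoint union $G_1\cup G_2$.

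\textbf{Main obstacle.} The routine part is the converse and the clean "opposite slopes" case; the delicate part of the forward direction is ruling out configurations in which the two components are interleaved so that no single witness cleanly separates $V(C_1)$ from $V(C_2)$ --- i.e., showing that the "same slope" sub-case really does segregate the two components into disjoint staircase strips rather than allowing them to weave past each other. I expect that a careful case analysis on the relative $x$- and $y$-order of the witnesses of $e_1$ and $e_2$, repeatedly invoking Observation~\ref{obs:quadrants} and Lemmas~\ref{lem:2sameSlope}--\ref{lem:2oppositeSlope} to forbid a would-be third independent edge, closes this gap, but this is where the bulk of the verification lies.
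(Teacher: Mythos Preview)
Your diagnosis of which case is hard is inverted, and the ``clean'' opposite-slopes case actually contains the real gap.

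In the same-slope case the paper's argument is short: once a witness $w_1$ of an edge of $C_1$ has all of $C_2$ in, say, its quadrant~IV (as forced by Lemma~\ref{lem:2sameSlope}), one checks that no vertex of $C_1$ can lie in quadrant~II of any $C_1$-witness (such a vertex would be adjacent to $C_2$), and Lemma~\ref{quadEmpty2} applied to $C_1$ \emph{alone} finishes. Note that the whole graph is \emph{not} a staircase graph here: witnesses of $C_1$ have quadrant~II empty while witnesses of $C_2$ have quadrant~IV empty, so your claim that ``the whole graph, after removing redundant witnesses, is a staircase graph'' already fails in this case---Lemma~\ref{quadEmpty2} must be applied to each component separately.

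The opposite-slopes case is where your plan actually breaks. When $e_1\in E^+\cap C_1$ and $e_2\in E^-\cap C_2$ share a witness $w$, all four quadrants of $w$ are occupied: $V(C_1)$ sits in quadrants~I and~III, $V(C_2)$ in~II and~IV. There is no witness with two opposite quadrants empty, so your step ``extract a witness that separates $P$ into two groups in opposite quadrants with the other two empty'' is impossible. Worse, $C_1$ need not consist only of positive-slope edges: there can be further witnesses, lying in quadrant~III of $w$ below and to the left of the bounding box $B_2$ of $C_2$, that witness \emph{negative}-slope edges of $C_1$ (and symmetrically in quadrant~I). The paper deals with this by showing that (i)~every $E^+$-edge of $C_1$ shares a witness with $e_2$, so those witnesses lie in $B_1\cap B_2$ and, after pruning, form an ascending chain giving one staircase piece; (ii)~the remaining $C_1$-witnesses in quadrant~III each have quadrant~III empty of $V(C_1)$, giving a second staircase piece; (iii)~likewise in quadrant~I; and then uses Lemma~\ref{lem:staircase-splits} to glue these into a staircase representation of $C_1$ as a join of at most three staircase graphs. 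This join argument is the missing idea in your sketch.

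A smaller issue in your converse: placing two type-IV realizations with $G_2$ translated into quadrant~II of all of $G_1$ does not prevent cross-edges. If $y\in P_2$ lies in quadrant~II of some $w_2\in W_2$ (allowed in type~IV) and $x\in P_1$ lies in quadrant~IV of $w_2$ (which it does, by your placement), then $w_2\in B(x,y)$ and $x\sim y$. The paper avoids this by reflecting each copy so that all vertices lie \emph{between} the two witness chains; some such adjustment is needed.
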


\begin{proof}
  We distinguish two cases.

\noindent\emph{All edges have the same slope:}
Suppose all edges of components $C_1$, $C_2$ have the same slope, say
positive.  Let $ab$ be an edge of $C_1$ witnessed by $w_1$ and $cd$ be
an edge in $C_2$ witnessed by $w_2$, with $x(a)<x(b)$ and $x(c)<x(d)$.
By Lemma~\ref{lem:2sameSlope}, $w_1$ and $w_2$ are distinct.  The
vertices $a$ and $b$ are in quadrants III~and~I of $w_1$,
respectively.  As $c$ and $d$ are not adjacent to $a$ or $b$, and as
$cd$ doesn't share its witness with $ab$, $c$ and $d$ are both in
quadrant~II or both in quadrant~IV of $w_1$.  Suppose, without loss of
generality, that $cd$ is in quadrant~IV of $w_1$ (see
  Figure~\ref{RGcrossing1}).
By a symmetric argument, $ab$ is in quadrant~II of $w_2$.  This
holds for any two edges $ab \in C_1$ and $cd \in C_2$.  (Given two
edges $ef \in C_1$ and $gh \in C_2$, we say $ef < gh$ if a witness of
$ef$ is in quadrant~II of a witness of $gh$, and $gh < ef$ otherwise.
We claim that either $ef < gh$ for all choices of edges $ef \in C_1$ and
$gh \in C_2$, or $gh < ef$, for all such choices.  Otherwise there
would have to exist, without loss of generality, a triple of edges $ef,
ij \in C_1$ and $gh \in C_2$, with $ef < gh < ij$.  This implies that
some witnesses $w'$, $w''$, $w''$ of $ef$, $gh$, $ij$, respectively,
form a descending chain.  Considering the relative positioning of the
three edges and three witnesses, we conclude that $ef$ and $ij$ must be
independent.   Hence we have three pairwise independent edges in
$E^+$, contradicting Lemma~\ref{lem:no3independentSameSlope}, 
thereby proving the claim.)
Notice that no vertex of $C_1$ is in quadrant~II of any witness of
$C_1$ or it would be connected to $C_2$.  Similarly, no vertex of
$C_2$ is in quadrant~IV of any witness of $C_2$ or it would be
connected to $C_1$.  Hence, by Lemma~\ref{quadEmpty2}, $C_1$ and
$C_2$ are both staircase graphs.

\noindent\emph{At least two edges have opposite slopes:}
There is at least one pair of edges $ab \in C_1$ and $cd \in C_2$ of opposite slopes.
Suppose, without loss of generality, that $ab \in E^+$ and $cd \in E^-$.
By Lemma~\ref{lem:2oppositeSlope}, $ab$ and $cd$ share a witness $w$
(see Figure~\ref{RGcrossing1}, center).
\begin{figure}[t]
  \centering
  \includegraphics[width=0.32\linewidth]{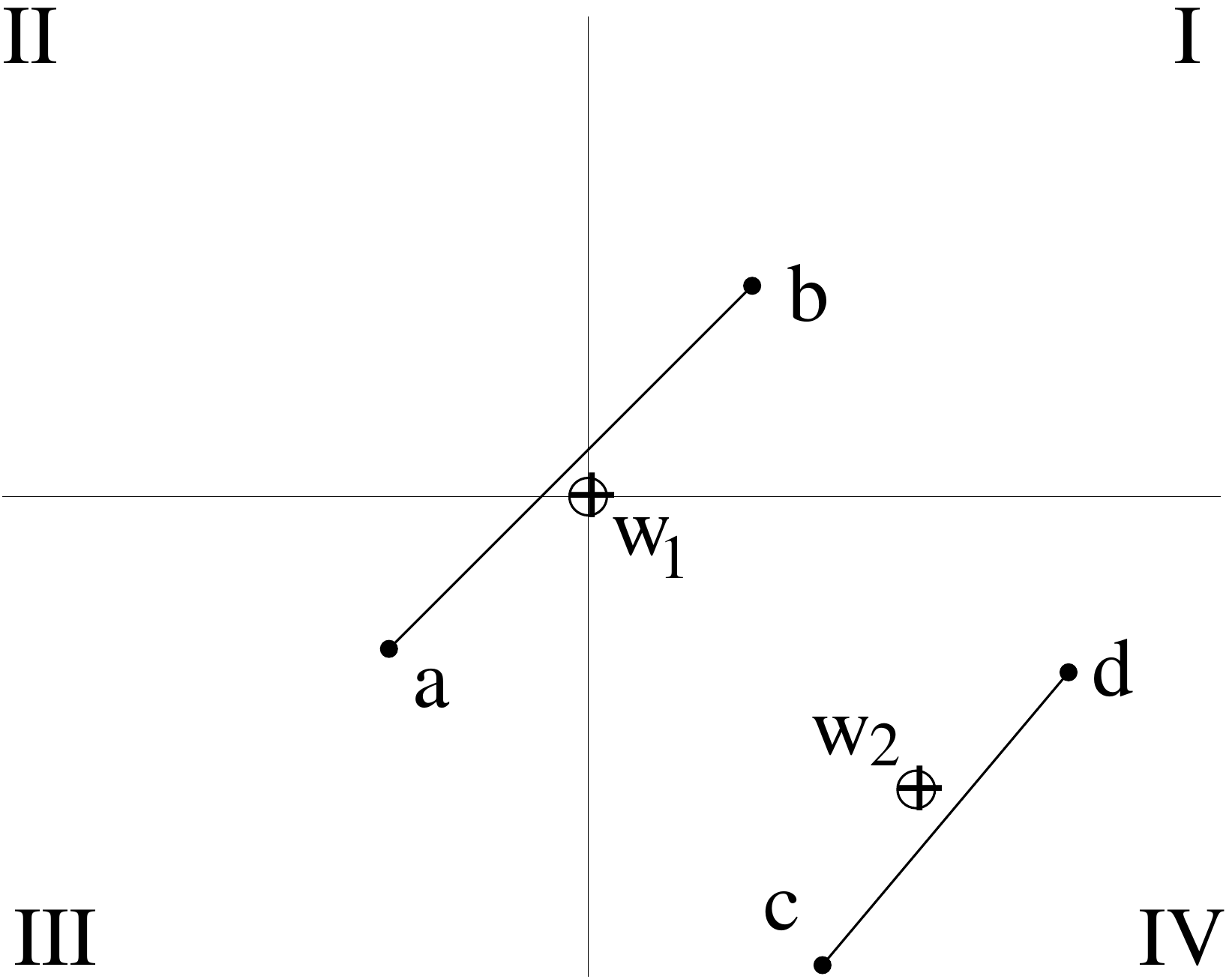}%
  \hspace*{0.03\linewidth}%
  \includegraphics[width=0.32\linewidth]{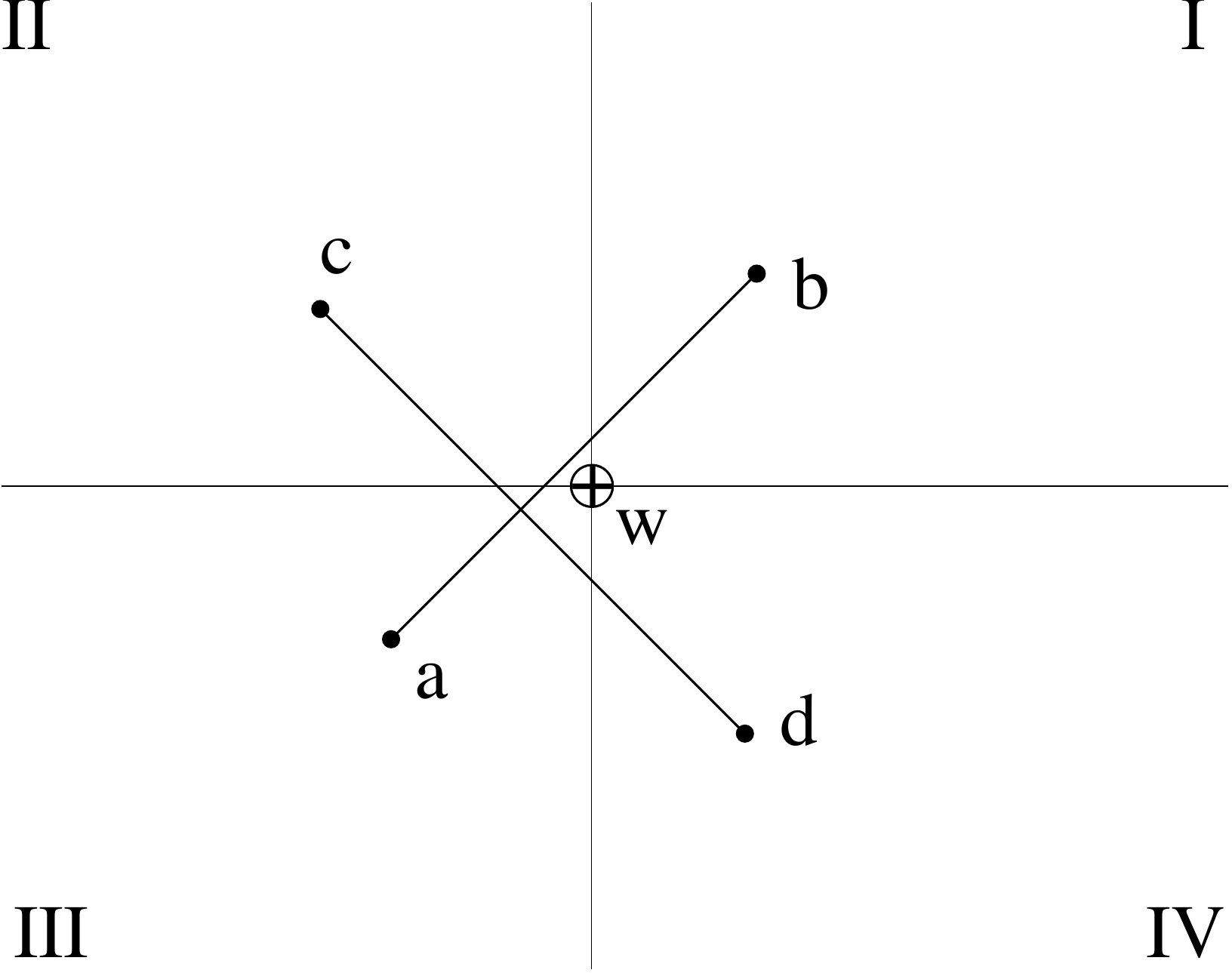}%
  \hspace*{0.03\linewidth}%
  \includegraphics[width=0.32\linewidth]{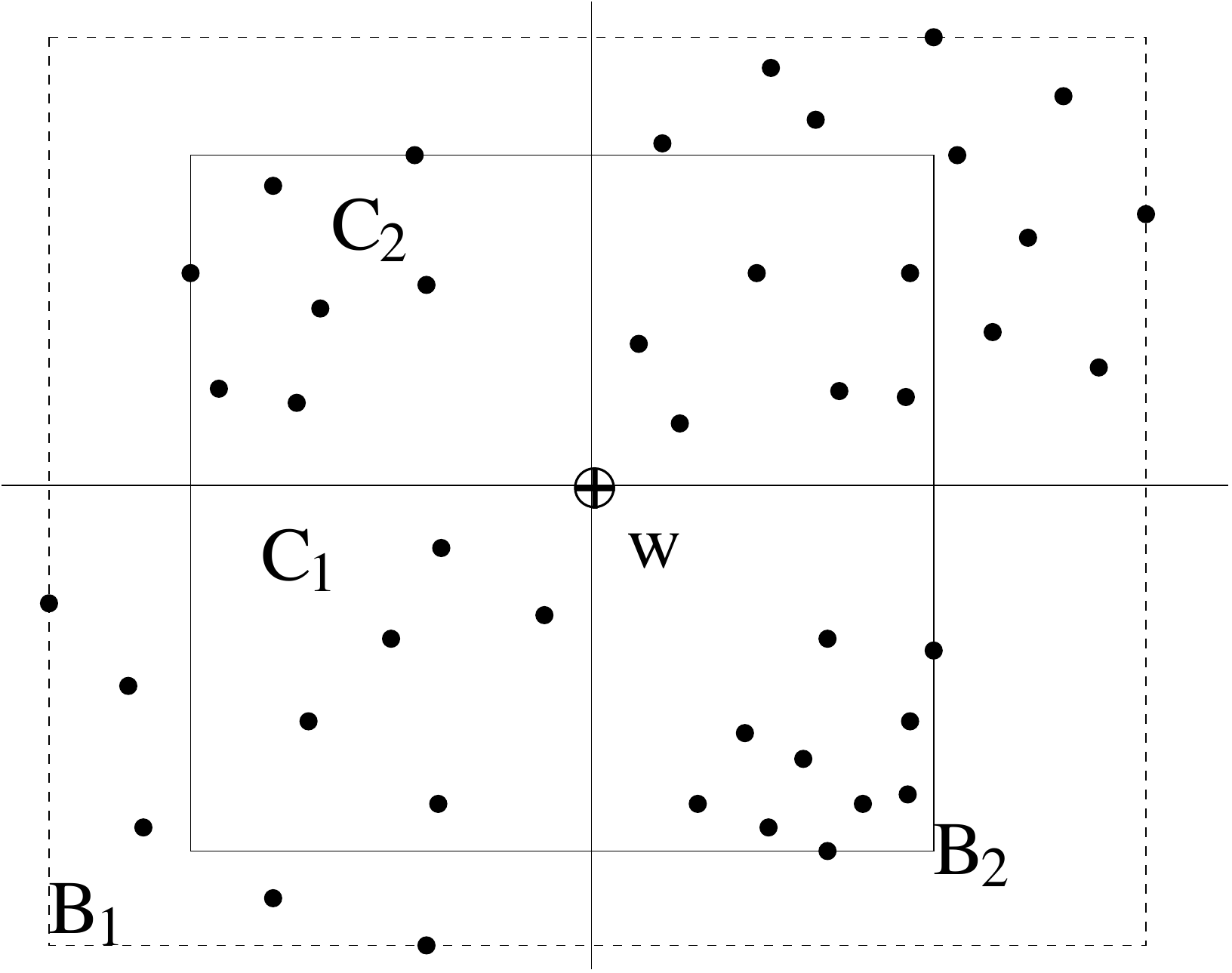}
  \caption{Cases in the proofs of Lemmas~\ref{lem:2sameSlope},
    \ref{lem:2oppositeSlope},
    \ref{lem:no3independentSameSlope}, \ref{lem:no3independent} and
    Theorem~\ref{staircase}: Two edges of positive slope (left), two
    edges of opposite slopes (center). Minimum bounding boxes of $V(C_1)$: $B_1$ (dashed),
    and of $V(C_2)$: $B_2$ (plain). $V(C_1)$ is in quadrants I~and~III, $V(C_2)$ is in quadrants II~and~IV (right).}
    \label{RGcrossing1}
\end{figure}

Draw isothetic boxes $B_1$ and $B_2$, as follows:
$B_1$ is the minimum bounding box of the vertices of $C_1$, while
$B_2$ is its analog for $C_2$ (see Figure~\ref{RGcrossing1}, right).

Consider $C_1$; we argue that it is isomorphic to a staircase graph;
as we can apply the same reasoning to $C_2$, this will imply the first
part of the 
theorem.  By Lemma~\ref{lem:2oppositeSlope}, every edge in $E^+ \cap C_1$ shares some witness with $cd$;
the witness must therefore lie in $B:=B_1 \cap B_2$.  Let $W'$ be the
set of all such witnesses; $w \in W'$.  All vertices of $C_1$ lie in
quadrants I~and~III of every $w' \in W'$, and all vertices of $C_2$
lie in quadrants II~and~IV of every $w'$; otherwise $C_1$ and $C_2$
would be connected.  Now remove redundant witnesses from $W'$, i.e.,
pick a minimal subset $W''$ of $W$ such that $H:=\RG^+(V(C_1),W')$
coincides with $\RG^+(V(C_1),W'')$.  $H$ is a staircase graph and the
witnesses of $W''$ form an ascending chain, by Lemma~\ref{quadEmpty2}.

Consider the portion of $C_1$ in quadrant~III of $w$.  All witnesses
to the left of $B_2$ and above its lower edge, or below $B_2$ and to
the right of its left edge, have two consecutive quadrants empty of
vertices of $C_1$ (if there were some vertices of $C_1$ in these
quadrants they would be adjacent to vertices of $C_2$ defining the
borders of $B_2$).  Therefore such a witness would not witness any edges
and cannot be present in $W$.  

Hence all remaining witnesses (in quadrant~III of $w$) of edges of $C_1$
must lie below and to the left of $B_2$ (and, of course, in $B_1$).  All
these witnesses have $B_2$ in their first quadrant, therefore they
witness edges of $E^-\cap E(C_1)$.  Let $W'''$ be a minimal subset of such
witnesses.
Each of them has
their quadrant~III empty of vertices of $C_1$ (any vertex of $C_1$ in
quadrant~III of such a witness $w'''$ would be adjacent to vertices in
$C_2$ as $B_2$ is in quadrant~I of $w'''$), therefore by
Lemma~\ref{quadEmpty2}, $\RG^+(V(C_1),W''')$ is a staircase graph.

Consider the lowest leftmost witness $w_\ell$ of $W''$ (recall that
they form an ascending chain).  As shown previously, $w_\ell$ doesn't
have any vertex of $C_1$ in its second and fourth quadrants.  Let
$V_1$ be the set of vertices of $C_1$ in its first quadrant and let
$V_2$ be the vertices of $C_1$ in its third quadrant.
As shown above, $\RG^+(V_1,W'' \setminus \{w'_\ell \})$ and
$\RG^+(V_2,W''')$ are staircase graphs, and, therefore, by
Lemma~\ref{lem:staircase-splits}, 
$\RG^+(V(C_1), W'' \cup W''')$ is isomorphic to a staircase graph.
We apply similar reasoning to quadrant~I of $B$, to conclude that
$C_1$ is a join of at most three staircase graphs and therefore
isomorphic to a staircase graph.

$C_2$ is handled by a symmetric argument (in fact, though it does not
affect the reasoning as presented, either $C_1$ contains
negative-slope edges, or $C_2$ contains positive-slope edges, but not
both), concluding the first part of the proof.

\noindent\emph{Converse:} 
Given two staircase graphs $G_1$ and $G_2$, place a scaled and
reflected copy of $G_1$ in quadrant~I of the plane, with witnesses on
the line $x+y=1$ and vertices below the line.  Place a scaled and
reflected copy of $G_2$ in quadrant~III of the plane, with witnesses
on the line $x+y=-1$ and vertices above the line.  It is easy to check
that the result is a witness rectangle graph isomorphic to $G_1 \cup G_2$. 

 \hfill $\Box$
\end{proof}

\section{What Are Staircase Graphs, Really?}
\label{sec:what-are-staircase-graphs}

The above discussion is 
unsatisfactory in that it describes one
new class of graphs in terms of another such new class.  In this
section, we discover that the class of graphs representable as
staircase graphs is really a well known family of graphs.

Recall that an \emph{interval graph} is one that can be realized as
the intersection graph of a set of intervals on a line, i.e., its set
of vertices can be put in one-to-one correspondence with a set of
intervals, with two vertices being adjacent if and only if the
corresponding intervals intersect.  A \emph{co-interval graph} is the
complement of an interval graph, i.e., a graph representable as a
collection of intervals in which adjacent vertices correspond to
disjoint intervals.

\begin{lemma}
  \label{lem:staircase-interval}
  Graphs representable as staircase graphs are exactly the co-interval
  graphs.
\end{lemma}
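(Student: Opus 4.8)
The plan is to prove the two inclusions separately, exhibiting explicit constructions in each direction.

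\medskip

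\noindent\textbf{Staircase graphs are co-interval.}
Let $G=\RG^+(P,W)$ be a staircase graph of type~IV, so the witnesses $w_1,\dots,w_m$ form an ascending chain, say indexed by increasing $x$-coordinate (hence decreasing index corresponds to increasing $y$-coordinate), and quadrant~IV of every witness is empty of vertices. First I would observe that, because quadrant~IV of each $w_j$ is empty, a vertex $p\in P$ lies strictly to the \emph{right} of $w_j$ if and only if it lies strictly \emph{above} $w_j$; so for each vertex $p$ the set of witnesses lying in the open third quadrant of $p$ (equivalently: dominated by $p$) is a \emph{contiguous suffix} of the chain, while the set of witnesses in the first quadrant of $p$ (those that dominate $p$) is a contiguous prefix; some witnesses may be in neither (to the upper-left or lower-right of $p$) but those form at most two contiguous blocks at the ends. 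Now an edge $pq$ with $x(p)<x(q)$ exists in $G$ iff $B(p,q)$ contains a witness; using the chain structure and emptiness of quadrant~IV, $B(p,q)$ contains a witness iff there is an index $j$ with $w_j$ dominated by $q$ and dominating $p$, i.e.\ iff the ``suffix of $q$'' and the ``prefix of $p$'' overlap. I would encode this by assigning to each vertex $p$ the interval $I_p$ on the index line $\{1,\dots,m\}$ consisting of those indices $j$ such that $w_j$ is \emph{not} dominated by $p$ and does \emph{not} dominate $p$ — equivalently, the ``middle gap'' of non-comparable witnesses — and then check that $p\sim q$ iff $I_p\cap I_q=\varnothing$. (The precise choice of interval needs a little care: the natural invariant is that each vertex $p$ partitions the chain into a prefix $A_p$ of witnesses dominating it, a suffix $C_p$ of witnesses it dominates, with the complement $B_p$ an interval in the middle; $pq$ is an edge iff $C_q\cap A_p\neq\varnothing$, and one shows this is equivalent to $B_p\cap B_q=\varnothing$ after orienting so $A_p\subseteq A_q$.) This exhibits $G$ as a co-interval graph. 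The monotone-chain bookkeeping here is the main obstacle; once the prefix/middle/suffix partition is set up correctly it is routine.

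\medskip

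\noindent\textbf{Co-interval graphs are staircase graphs.}
Conversely, let $G$ be a co-interval graph, given by intervals $J_1,\dots,J_n$ on the real line with $v_i\sim v_k$ iff $J_i\cap J_k=\varnothing$. I would sort the $2n$ endpoints and, walking left to right, place a witness ``between'' every pair of consecutive endpoints (and before the first / after the last, if helpful), putting these witnesses on a strictly ascending chain, e.g.\ on the line $y=x$ at positions $t_1<t_2<\cdots$. For vertex $v_i$ with $J_i=[\ell_i,r_i]$, I then place the point $p_i$ so that the witnesses it dominates are exactly those sitting to the left of $\ell_i$, and the witnesses dominating it are exactly those sitting to the right of $r_i$; concretely put $p_i$ at $x$-coordinate just below the $t$-value of the first witness past $\ell_i$ and $y$-coordinate just above the $t$-value of the last witness before $r_i$ — this is consistent (keeps all $p_i$ above the chain, quadrant~IV empty) precisely because $\ell_i\le r_i$. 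Then $B(p_i,p_k)$, for $x(p_i)<x(p_k)$, contains a witness iff there is a chain position lying strictly right of $\ell_k$'s marker yet strictly left of $r_i$'s marker, which by construction happens iff $J_i$ and $J_k$ are disjoint — matching adjacency in $G$. One also has to handle degenerate coincidences of endpoints and isolated vertices, but these are minor perturbations.

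\medskip

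\noindent I expect the first direction to be where the real content lies: translating the geometric adjacency condition ``$B(p,q)$ hits the chain'' into the combinatorial statement ``two middle-gap intervals are disjoint'' requires getting the case analysis of where vertices sit relative to the chain exactly right (and noting that the empty quadrant~IV forces the clean prefix/gap/suffix structure). The converse is essentially a direct encoding once one realizes a co-interval representation can be ``snapped'' onto an integer grid of $2n$ endpoints and read off as witness positions.
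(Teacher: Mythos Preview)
Your approach is the same as the paper's in both directions: associate to each vertex an interval on the witness chain and show adjacency corresponds to disjointness; conversely, snap interval endpoints onto a chain of witnesses and place each vertex above the chain at the position determined by its endpoints.

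However, your first direction has a genuine gap. The ``middle gap'' $B_p$ can be empty, and then $B_p\cap B_q=\varnothing$ holds trivially regardless of adjacency. Concretely, with a single witness $w$ and two vertices $p,q$ both in quadrant~I of $w$, both middle gaps are empty yet $p\not\sim q$; so the claimed equivalence fails. You acknowledge that ``the precise choice of interval needs a little care'' but do not supply the fix. The paper's remedy is to enlarge each interval by one step: $I_v$ consists of the witnesses in quadrant~IV of $v$ \emph{together with the witness immediately above $v$}, after first adding an artificial witness on the line above all vertices so that such a witness always exists. This forces every $I_v$ to be nonempty and makes the equivalence go through cleanly.

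There is also some orientation confusion in your write-up. For an ascending chain indexed by increasing $x$, increasing index also means increasing $y$ (not decreasing), so the witnesses dominated by $p$ form a \emph{prefix}, not a suffix. And the claim ``$p$ lies strictly to the right of $w_j$ iff $p$ lies strictly above $w_j$'' is only an implication (right $\Rightarrow$ above, from emptiness of quadrant~IV of $w_j$); the converse fails since quadrant~II of $w_j$ can certainly contain vertices. What \emph{is} true, and what you need, is that no witness lies in quadrant~II of any vertex $p$ (else $p$ would sit in that witness's quadrant~IV), which is what yields the clean prefix/middle/suffix partition of the chain relative to each vertex.

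Your second direction matches the paper's construction essentially verbatim.
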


\begin{proof}
  Consider a vertex $v$ in a staircase graph $\RG^+(V,W)$.  Without
  loss of generality, assume the witnesses~$W$ lie on the line $\ell
  \colon y=x$ and the vertices lie above it.  
    Create an artificial witness on $\ell$ lying above all
  vertices.  Associate $v$ with the smallest interval $I_v$ of $\ell$
  containing all witnesses lying to the right and below $v$ as well as
  the witness immediately above $v$.  It is easily checked that $v
  \sim v'$ in $\RG^+(V,W)$ if and only if $I_v$ and $I_{v'}$ do not
  meet.  Thus the intersection graph of the intervals $\{ I_v \mid v
  \in V\}$ is isomorphic to the complement of $\RG^+(V,W)$.

  Conversely, let $H$ be a co-interval graph on $n$ vertices and
  $\{I_v\}$ 
  its realization by a set of closed intervals on the
  line $\ell \colon x=y$.  Extend each interval, if necessary,
  slightly, to ensure that the $2n$ endpoints of the intervals are
  distinct.  Place $2n-1$ witnesses along $\ell$, separating
  consecutive endpoints, and transform each interval
  $I_v=[(a_v,a_v),(b_v,b_v)]$ into point $p_v=(a_v,b_v)$.  Let $W$ and
  $P$ be the set of witnesses and points thus generated.  Now $I_v$
  misses $I_w$ if and only if $a_v < b_v < a_w < b_w$ or $a_w < b_w <
  a_v < b_v$, which happens 
  if and only if the rectangle
  $B(p_v,p_w)$ contains a witness.  Hence $\RG^+(P,W)$ is isomorphic
  to $H$, as claimed.  \hfill $\Box$
\end{proof}

Theorem~\ref{staircase} and Lemma~\ref{lem:staircase-interval} imply
the following 
more satisfactory statement:
\begin{theorem}
  \label{thm:staircase-two-intervals}
  The class of graphs representable as witness rectangle graphs with
  two non-trivial connected components is precisely the class of
  graphs formed as the disjoint union of zero or more isolated vertices
  and two co-interval graphs.
\end{theorem}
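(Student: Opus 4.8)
The plan is to assemble the statement directly from two results that are already in hand, namely Theorem~\ref{staircase} and Lemma~\ref{lem:staircase-interval}, so the proof is essentially a bookkeeping argument rather than a new combinatorial insight. First I would recall that Theorem~\ref{staircase} asserts an equivalence: a graph with two non-trivial connected components is representable as a WRG if and only if (disregarding isolated vertices) it is the disjoint union of two graphs each representable as a staircase graph. Then Lemma~\ref{lem:staircase-interval} identifies the staircase-representable graphs precisely as the co-interval graphs. Substituting the latter into the former yields that the WRGs with two non-trivial components are exactly the disjoint unions of two co-interval graphs, together with any number of isolated vertices. So the body of the proof is simply to state these two substitutions carefully and note that the directions of implication line up.

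The one point that needs genuine care—and which I expect to be the only mild obstacle—is the handling of isolated vertices and the phrase ``zero or more,'' because Theorem~\ref{staircase} as stated speaks of \emph{non-trivial} connected components and is silent about isolated vertices, whereas the target statement explicitly folds them in. I would address this by observing, for the forward direction, that adding isolated vertices to $P$ (points placed so that no box they define with another vertex contains a witness, e.g.\ far away) does not change adjacency among the other vertices, so a disjoint union of two co-interval graphs plus isolated vertices is still a WRG; conversely, deleting isolated vertices from a WRG leaves a WRG on an induced point set (the induced-subgraph closure noted right after Observation~\ref{obs:quadrants}), and what remains has exactly two non-trivial components, hence is a disjoint union of two co-interval graphs by Theorem~\ref{staircase} and Lemma~\ref{lem:staircase-interval}. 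I should also note that a co-interval graph may itself be disconnected, but a \emph{non-trivial connected component} of the WRG must be connected; I would remark that within Theorem~\ref{staircase} the two staircase graphs are the two components and hence each is connected, and that a connected co-interval graph is still a co-interval graph, so no inconsistency arises—the ``two co-interval graphs'' in the statement can be taken connected, or left arbitrary with the understanding that disconnected pieces contribute further isolated-vertex-like or smaller components, which is exactly why one phrases it with the ``zero or more isolated vertices'' slack.

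Concretely, the write-up would be three or four sentences: ``By Theorem~\ref{staircase}, a WRG with two non-trivial connected components has each component isomorphic to a staircase graph, and conversely the disjoint union of two staircase-representable graphs is a WRG; isolated vertices can be added or removed without affecting adjacencies among the remaining vertices, by the induced-subgraph property. By Lemma~\ref{lem:staircase-interval}, the staircase-representable graphs are precisely the co-interval graphs. Combining the two gives the claim.'' I would not need to reprove either ingredient, nor reconstruct any of the geometric case analysis behind Theorem~\ref{staircase}; the entire content is the observation that the two characterizations compose. If the referee wanted extra rigor on the isolated-vertex bookkeeping, the fallback is to phrase everything in terms of the graph obtained by deleting all isolated vertices and to apply the two earlier results verbatim to that graph.
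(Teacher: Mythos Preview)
Your proposal is correct and matches the paper's approach exactly: the paper does not give a separate proof of this theorem at all, but simply states that ``Theorem~\ref{staircase} and Lemma~\ref{lem:staircase-interval} imply the following more satisfactory statement.'' Your write-up is a faithful (and more explicit) unpacking of that one-line justification, including the isolated-vertex bookkeeping that the paper leaves implicit.
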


\begin{corollary}
  Whether or not a given combinatorial graph $G=(V,E)$ with two
  non-trivial connected components can be drawn as a witness rectangle graph can be tested
  in time $O(|V|+|E|)$; a drawing, if it exists, can be constructed in the same time.
\end{corollary}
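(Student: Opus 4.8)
The plan is to turn the decision problem into a membership test for a well-understood graph class by invoking Theorem~\ref{thm:staircase-two-intervals}: a graph with exactly two non-trivial connected components is a witness rectangle graph if and only if each of those two components (which, being connected components, are connected) is a co-interval graph. So the algorithm I would use first computes the connected components of $G$ in $O(|V|+|E|)$ time by a single traversal; by hypothesis there are exactly two non-trivial components $C_1$ and $C_2$ and all other components are isolated vertices (the algorithm can also just check this), and it then tests whether $C_1$ and $C_2$ are each co-interval graphs, accepting if and only if both tests succeed.

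The heart of the matter is carrying out each co-interval test --- equivalently, deciding whether $\overline{C_i}$ is an interval graph --- within $O(|V(C_i)|+|E(C_i)|)$ time. Interval graphs are recognizable in linear time (by the algorithm of Booth and Lueker, among others), but one cannot afford to hand $\overline{C_i}$ to such a recognizer verbatim, since $\overline{C_i}$ may have $\Theta(|V(C_i)|^2)$ edges even when $C_i$ is sparse (for instance $C_i=K_{1,k}$, a legitimate connected co-interval graph). I would resolve this with a dichotomy. If $|E(C_i)|\ge\binom{|V(C_i)|}{2}/2$, then $\overline{C_i}$ has at most $|E(C_i)|$ edges, so it can be built explicitly and fed to the linear-time interval-graph recognizer, all within $O(|V(C_i)|+|E(C_i)|)$. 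Otherwise $\overline{C_i}$ is dense, and I would instead run a recognition procedure directly on $C_i$: a lexicographic-BFS-based interval-graph recognition can be simulated on $\overline{C_i}$ using only the adjacency structure of $C_i$, by refining partition classes with respect to non-neighbors (rather than neighbors), at total cost $O(|V(C_i)|+|E(C_i)|)$. Making this complement-aware linear-time recognition of co-interval graphs fully precise (or pinning down an off-the-shelf citation for it) is the step I expect to be the main obstacle; everything else is bookkeeping.

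For the affirmative case, the construction can piggyback on the recognition: a successful test for $C_i$ yields an interval representation of $\overline{C_i}$, which I would feed into the construction used in the proof of Lemma~\ref{lem:staircase-interval} (converse direction) to obtain, in $O(|V(C_i)|)$ additional time, an explicit drawing of $C_i$ as a staircase graph on $2|V(C_i)|-1$ witnesses. Following the converse direction of Theorem~\ref{staircase}, I would then place a scaled and reflected copy of the drawing of $C_1$ in the first quadrant with its witnesses on the line $x+y=1$ and its vertices below that line, and a scaled and reflected copy of the drawing of $C_2$ in the third quadrant with its witnesses on the line $x+y=-1$ and its vertices above that line; Theorem~\ref{staircase} guarantees that this realizes $C_1\cup C_2$ as a witness rectangle graph. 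Finally, I would insert the isolated vertices of $G$ far from both staircase pieces, in positions compatible with which quadrant is empty of vertices in each of the two drawings, so that every box an isolated vertex spans with an existing point avoids all witnesses; a short case analysis confirms this placement works. All steps after obtaining the two interval representations take $O(|V|)$ time, so the total running time is $O(|V|+|E|)$, as claimed.
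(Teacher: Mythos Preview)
Your approach is essentially the same as the paper's: reduce via Theorem~\ref{thm:staircase-two-intervals} to testing whether each non-trivial component is a co-interval graph, and then appeal to a linear-time recognition/reconstruction procedure for that class. The paper's entire proof is a one-line citation to Habib, McConnell, Paul, and Viennot (Lex-BFS and partition refinement) together with a personal communication from McConnell, which resolves exactly the obstacle you flag: their partition-refinement framework lets one run the relevant Lex-BFS passes on the complement of a graph in time linear in the \emph{original} graph, so co-interval graphs are recognized (and an interval model of the complement produced) in $O(|V|+|E|)$ without ever materializing $\overline{C_i}$. Your dense/sparse dichotomy is a reasonable workaround, but it is unnecessary once you have that citation; you may simply invoke it and drop the case split.
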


\begin{proof}
  Use the linear-time recognition and reconstruction algorithm for co-interval graphs from~\cite{IGraph,PComm}.
  \hspace*{0pt} \hfill $\Box$
\end{proof}

\section{Domination number}
For a graph $G$ and a subset $S$ of the vertex set  $V(G)$, denote by $N_G[S]$ the set of vertices in $G$  which are in $S$ or adjacent to a vertex in $S$. If $N_G[S]=V(G)$, then $S$  is said to be a \emph{dominating set} of vertices in  $G$.
The \emph{domination number} of a graph $G(V,E)$, denoted $\gamma(G)$, is the minimum size of a dominating set of vertices in $V$ \cite{mathworld:DominatingSet, mathworld:DominationNumber}.

\begin{lemma}
\label{DominationNumberStaircase}
The maximum domination number of a connected staircase graph is two.
\end{lemma}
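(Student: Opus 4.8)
The plan is to use the characterization of staircase graphs as connected co-interval graphs (Lemma \ref{lem:staircase-interval}), equivalently to work directly with the geometric staircase realization and the interval model dual to it. I would set up a staircase graph of type IV: witnesses on the line $\ell\colon y=x$ forming an ascending chain $w_1,\dots,w_m$ from lower-left to upper-right, all vertices lying above $\ell$. Recall from Observation \ref{obs:quadrants} and the structure of the graph that two vertices $x,y$ are adjacent exactly when the box $B(x,y)$ catches one of these witnesses, i.e.\ when one of $x,y$ lies strictly to the lower-left and the other strictly to the upper-right of some common witness.

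First I would show $\gamma \le 2$ by exhibiting an explicit dominating pair. The natural candidates are extremal vertices: let $x^{*}$ be a vertex achieving the minimum $x$-coordinate among all vertices (equivalently, the leftmost vertex), and let $y^{*}$ be the vertex achieving the maximum $y$-coordinate (the topmost vertex). For any other vertex $v$, since the graph is connected $v$ has at least one neighbor, so $v$ lies to the lower-left or to the upper-right of some witness $w_i$; using that $x^{*}$ is leftmost and $y^{*}$ is topmost, one checks that if $v$ is to the upper-right of some witness then $B(x^{*},v)$ contains that witness (because $x^{*}$ is even further to the left and, being above $\ell$, is below-left of $w_i$), so $v\sim x^{*}$; symmetrically, if $v$ is to the lower-left of some witness then $v\sim y^{*}$. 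Hence $\{x^{*},y^{*}\}$ dominates, giving $\gamma\le 2$. The one subtlety to check carefully is that a vertex below-left of a witness is indeed dominated by $y^{*}$ and not the other way around — this needs the fact that $y^{*}$, being topmost and above the ascending chain, sits above-right of every witness that has anything below-left of it; I would verify this using that the witnesses form an ascending chain so the topmost vertex is above-right of all but possibly the last witness, and the last witness has nothing strictly above-right of it on the vertex side anyway (or handle it by the artificial-witness trick from Lemma \ref{lem:staircase-interval}).

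Next I would show $\gamma \ge 2$ for a connected staircase graph with at least two vertices, i.e.\ that no single vertex dominates. A staircase graph is a co-interval graph, hence its complement is an interval graph; a single vertex $v$ dominates $G$ iff $v$ is adjacent to all others iff, in the complement, $v$ is isolated iff the interval $I_v$ meets no other interval. If $G$ is connected and has $\ge 2$ vertices, I claim no such isolated-in-the-complement vertex can exist: if $I_v$ were disjoint from every other interval, then in $G$ the vertex $v$ is universal, but then deleting $v$ leaves $G-v$, and... — actually the cleaner route is: a universal vertex $v$ exists iff $G = \{v\} + (G-v)$, and one must still have $G-v$ nonempty; this does not by itself contradict connectivity. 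So instead I would argue geometrically: suppose $v$ is universal. Then for the leftmost vertex $x^{*}\ne v$ and the topmost vertex $y^{*}\ne v$ (if they coincide with $v$, pick any other vertex, using $|V|\ge 2$), $v$ must be adjacent to both. But $v\sim x^{*}$ forces $v$ to lie strictly above-right of some witness (since $x^{*}$, being leftmost and above $\ell$, cannot be the upper-right endpoint), and $v\sim y^{*}$ forces $v$ to lie strictly below-left of some witness; combined with the ascending-chain structure this pins $v$ between two specific witnesses $w_i, w_j$ with $i<j$. Then any vertex $u$ that is strictly below-left of $w_i$ (such $u$ exists — it is an endpoint realizing $x^{*}$ or forcing the edge $v\sim y^{*}$, made precise via extremal choices) is to the lower-left of $v$ as well but there is no witness in $B(u,v)$, contradicting universality. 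I expect this lower-bound direction to be the main obstacle, since it requires carefully choosing the two extremal vertices, ruling out degenerate coincidences when $|V|$ is small (the statement is about connected graphs, so $|V|\ge 2$ and in fact for $|V|=2$ the single edge already has $\gamma=1$ — wait, that would contradict the lemma).

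Let me reconsider: for $|V|=2$ with one edge, $\gamma=1$, so the lemma as literally stated (\emph{maximum} domination number is two) must mean the supremum over connected staircase graphs is exactly $2$, attained by some but not all. So I only need: (i) $\gamma\le 2$ always (the argument above), and (ii) there exists a connected staircase graph with $\gamma=2$. For (ii) I would exhibit a small example — e.g.\ a path $P_4$ on four vertices, which is a connected co-interval graph (its complement $P_4$ is self-complementary, an interval graph) and has domination number $2$ — or an explicit four-point staircase realization — and check no single vertex dominates it. This makes the lemma routine: the real content is the upper bound $\gamma \le 2$ via the extremal-vertex pair, and the tightness is witnessed by any connected co-interval graph with no universal vertex, such as $P_4$. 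Thus I would present the upper bound carefully as above and close with the $P_4$ example for tightness.
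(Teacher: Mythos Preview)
Your lower-bound direction is fine and matches the paper: $P_4$ is a connected staircase graph with $\gamma=2$.

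The upper-bound argument, however, has a genuine gap. Your dominating pair $\{x^*,y^*\}$ (leftmost vertex, topmost vertex) does \emph{not} dominate in general. The error is the parenthetical claim ``because $x^*$ is even further to the left and, being above $\ell$, is below-left of $w_i$'': lying above the diagonal $\ell$ says nothing about lying \emph{below} a particular witness $w_i$. Concretely, take witnesses $w_1=(0,0)$, $w_2=(20,20)$ on $\ell$ and vertices $x^*=(-5,15)$, $b=(-1,-\tfrac12)$, $v=(5,6)$, $y^*=(1,30)$, $c=(25,26)$, all strictly above $\ell$ with every quadrant~IV empty. One checks the edges are exactly $x^*c$, $vc$, $vb$, $y^*b$, $bc$, so the graph is the connected path $x^*\!-\!c\!-\!v\!-\!b\!-\!y^*$ with the extra edge $bc$. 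Here $x^*$ is leftmost and $y^*$ is topmost, yet $v$ is adjacent to neither: $B(x^*,v)$ and $B(y^*,v)$ miss both witnesses. So $\{x^*,y^*\}$ is not dominating. The same example defeats your symmetric claim about $y^*$: the topmost vertex need not sit above-right of any witness other than $w_1$.

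The fix, which is exactly what the paper does, is to choose the dominating pair relative to the \emph{witness} chain rather than to vertex coordinates: take $d_1$ in quadrant~III of the lowest witness $w_1$ and $d_2$ in quadrant~I of the highest witness $w_k$ (both exist by minimality of the witness set). Then $d_1$ lies below-left of \emph{every} witness and $d_2$ above-right of \emph{every} witness, so any non-isolated vertex---necessarily in quadrant~I or~III of some $w_i$---is adjacent to $d_1$ or $d_2$, respectively; the residual case of vertices in quadrant~II of $w_k$ is dispatched by observing such a vertex cannot also lie left of $w_1$ (it would be isolated), whence $B(d_1,\cdot)$ captures $w_1$. In the example above this gives $\{b,c\}$, which does dominate.
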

\begin{proof}
For the lower bound, consider a staircase graph of type $IV$ that is a path of length three. It contains four vertices.
It is easy to check that it cannot be dominated by one vertex.

For the upper bound, consider a non-trivial connected staircase graph of type~$IV$ (see Figure~\ref{DominationStaircaise2}). 
\begin{figure}
\centering
\scalebox{0.5}{\input{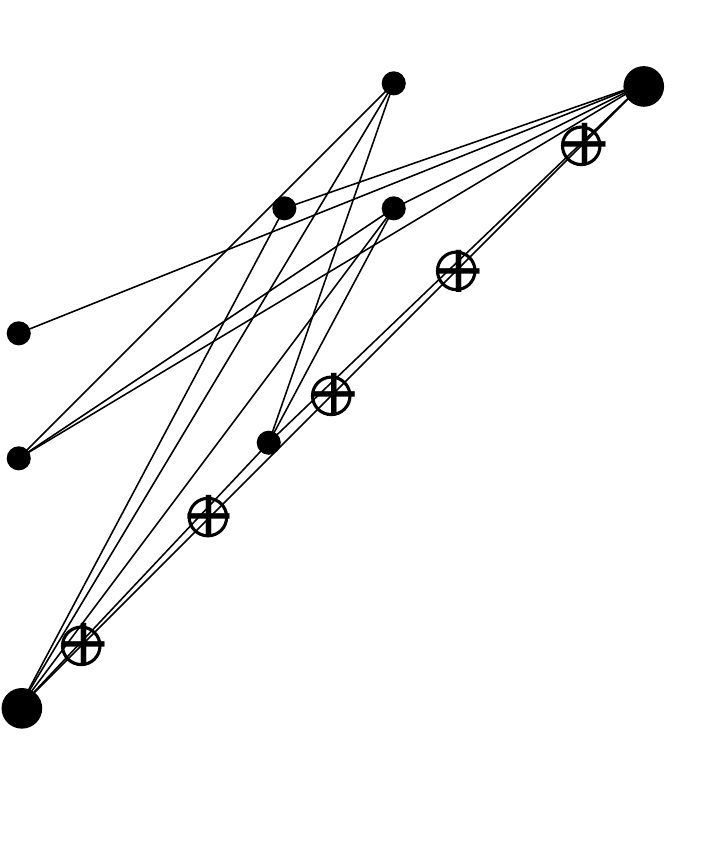_t}}
   \caption{The two bold vertices form a dominating set for this staircase graph.}
   \label{DominationStaircaise2}
\end{figure}
Let $w_i$, $1 \leq i \leq k$, be the witnesses in order from left to right, with $w_1$ being the bottommost leftmost witness and $w_k$ the uppermost rightmost witness.
Recall that we assume that the set of witnesses is minimal, hence there is no witness with two adjacent quadrants empty of vertices.
This implies that there is at least one vertex $d_1$ below $w_1$ and at least one vertex $d_2$ to the right of $w_k$.
The vertices $d_1$ and $d_2$ form a dominating set of size two.
Indeed all vertices of the graph are in quadrants $I$, $II$ and $III$ of $w_k$.
The vertex $d_1$ is adjacent to all vertices in the graph that are in quadrant $I$ and $II$ of $w_k$, and $d_2$ is adjacent to all vertices in quadrant $III$ of $w_k$ (since the graph is connected, there are no vertices lying to the left and above of both $w_1$ and $w_k$ -- such vertices would have to be isolated.)
 \hfill $\Box$
\end{proof}

\begin{theorem}
\label{DominationNumberWRG}
The maximum domination number of a witness rectangle graph with non-trivial connected components is four.
\end{theorem}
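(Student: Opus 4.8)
The plan is to prove the two inequalities separately, the upper bound being the real content. Throughout I would read the hypothesis ``with non-trivial connected components'' as ``$\RG^+(P,W)$ has no isolated vertex'' (an isolated vertex belongs to every dominating set, so some such hypothesis is unavoidable). The single elementary fact I would lean on is: if two vertices $p,q$ lie in opposite open quadrants of a common witness $w$ --- $p$ in quadrant~I and $q$ in quadrant~III, or $p$ in quadrant~II and $q$ in quadrant~IV --- then $w\in B(p,q)$, so $p\sim q$. I would then split the argument according to whether some witness sees vertices in all four of its open quadrants.

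In the first case, suppose a witness $w$ has a vertex $p_i$ in each quadrant $i=1,2,3,4$. Since no vertex of $P$ shares a coordinate with $w$, the four open quadrants of $w$ partition $P$, and the fact above shows $\{p_1,p_2,p_3,p_4\}$ dominates $\RG^+(P,W)$: every vertex in quadrant~I~or~III is adjacent to $p_3$ or $p_1$, and every vertex in quadrant~II~or~IV is adjacent to $p_4$ or $p_2$. In the complementary case --- no witness has all four quadrants occupied --- I would take $t,b,\ell,r\in P$ of maximum $y$-, minimum $y$-, minimum $x$-, and maximum $x$-coordinate, respectively, and claim $\{t,b,\ell,r\}$ is dominating. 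If it were not, there would be a vertex $v\notin\{t,b,\ell,r\}$ adjacent to none of them; being non-isolated, $v$ has a neighbor $u$ witnessed by some $w\in B(v,u)$, and after a $90^\circ$ rotation (which permutes the quadrants and carries the ``extreme'' set to itself) I may assume $w$ lies in quadrant~I of $v$, so $u$ lies in quadrant~I of $w$ with $v_x<w_x<u_x$ and $v_y<w_y<u_y$. Extremality of $t$ and $r$ gives $w_y<u_y\le t_y$ and $w_x<u_x\le r_x$; then $w_x<t_x$ would force $w\in B(v,t)$ and hence $v\sim t$, and $w_y<r_y$ would force $w\in B(v,r)$ and hence $v\sim r$ --- both excluded. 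So $w_x>t_x$ and $w_y>r_y$, placing $t$ in quadrant~II of $w$, $r$ in quadrant~IV, $u$ in quadrant~I and $v$ in quadrant~III, i.e.\ all four quadrants of $w$ occupied --- contradicting the case. Either way $\gamma\le4$.

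The step I expect to be the main obstacle is this second case: a connected (single-component) WRG need not be a staircase graph and the class of connected WRGs has no known characterization, so one cannot just combine Lemma~\ref{DominationNumberStaircase} with Theorem~\ref{thm:rectangleStructural} and Theorem~\ref{staircase}; the dominating set must be extracted from the point configuration directly, and the real content is checking that the ``four extreme vertices'' argument and the ``witness with four full quadrants'' argument together cover every configuration. (Incidentally, for a WRG with exactly two non-trivial components one also gets the bound from Lemma~\ref{DominationNumberStaircase} and Theorem~\ref{staircase}, namely $2+2$.)

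For the lower bound I would exhibit a WRG with $\gamma=4$: the disjoint union of two paths on four vertices. A path on four vertices is a staircase graph --- it is the ``path of length three'' realized as in the proof of Lemma~\ref{DominationNumberStaircase} --- so the converse half of Theorem~\ref{staircase} makes the union a WRG; it has no isolated vertex, and since the domination number is additive over connected components and $\gamma(P_4)=2$, this union has domination number $4$, matching the upper bound.
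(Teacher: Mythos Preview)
Your proof is correct, and your upper-bound argument is genuinely different from---and considerably simpler than---the paper's.  Both proofs open with your Case~1 (a witness with all four quadrants occupied yields a dominating four-set).  In the complementary situation the paper splits further: if some witness has exactly two opposite non-empty quadrants it picks one vertex from each (domination by two), and otherwise every witness has exactly one empty quadrant.  For this last case the paper assumes $W$ minimal, classifies witnesses into types~I--IV according to which quadrant is empty, argues that the witnesses of each type induce a staircase subgraph of that type, analyzes how the four monotone chains of witnesses must be positioned relative to one another, and finally extracts four vertices $v_1,\dots,v_4$ such that each consecutive pair $v_i,v_{i+1}$ dominates one of the staircase subgraphs via Lemma~\ref{DominationNumberStaircase}; since the four staircase subgraphs together cover the WRG, $\{v_1,\dots,v_4\}$ dominates.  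Your single Case~2 replaces all of this structure: the four coordinate-extreme vertices $t,b,\ell,r$ dominate, because a non-dominated non-isolated vertex would force the witness of one of its edges to see vertices in all four quadrants.  This needs neither minimality of $W$ nor any staircase decomposition, and the rotation trick handles the symmetry cleanly.  What the paper's route buys is a finer picture of how the witnesses organize themselves in the residual case; what yours buys is brevity and robustness.

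Your lower bound (two disjoint copies of $P_4$, realized as a WRG via Theorem~\ref{staircase}) is exactly the natural construction; the paper's proof of Theorem~\ref{DominationNumberWRG} treats only the upper bound explicitly and leaves the matching example implicit in Lemma~\ref{DominationNumberStaircase}.
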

\begin{proof}
We will consider three cases:

\begin{enumerate}[i)]
\item
There is a witness $w$ in WRG such that there is at least one vertex in each quadrant of $w$. Pick one vertex in each quadrant, they form a dominating set of size four.
Indeed, the vertices in quadrant $I$, $II$, $III$, and $IV$ are respectively adjacent to all vertices in quadrant $III$, $IV$, $I$, and $II$.

\item
There is a witness $w$ in WRG such that it has two opposite quadrants with at least one vertex in each of them, and two opposite quadrants empty of vertices.
Pick one vertex in each of the non-empty quadrants, they form a dominating set of size two.
Indeed suppose without loss of generality that the non-empty quadrants are~$I$ and~$III$.
Any vertex in quadrant~$I$ is adjacent to all vertices in quadrant~$III$, and any vertex in quadrant~$III$ is adjacent to all vertices in quadrant~$I$.

\item
All witnesses have one quadrant empty of vertices and three adjacent quadrants containing at least one vertex each.
This is the last case as the set of witnesses is minimal which implies that there is no witness with vertices in one quadrant only or in two adjacent quadrants only.

Let the witnesses with one quadrant empty be the witnesses of type~$I$, $II$, $III$, and $IV$ with the number of the type of witness corresponding to the number of the empty quadrant. 
For each type of witness, all witnesses of this type witness the edges of a staircase graph of this type, which is a subgraph of the WRG.
The union of the four staircase graphs defined by the four types of witnesses is the original WRG (see Figure~\ref{DominationNumber}, for clarity the edges are not represented).
\begin{figure}
\centering
\scalebox{0.6}{\input{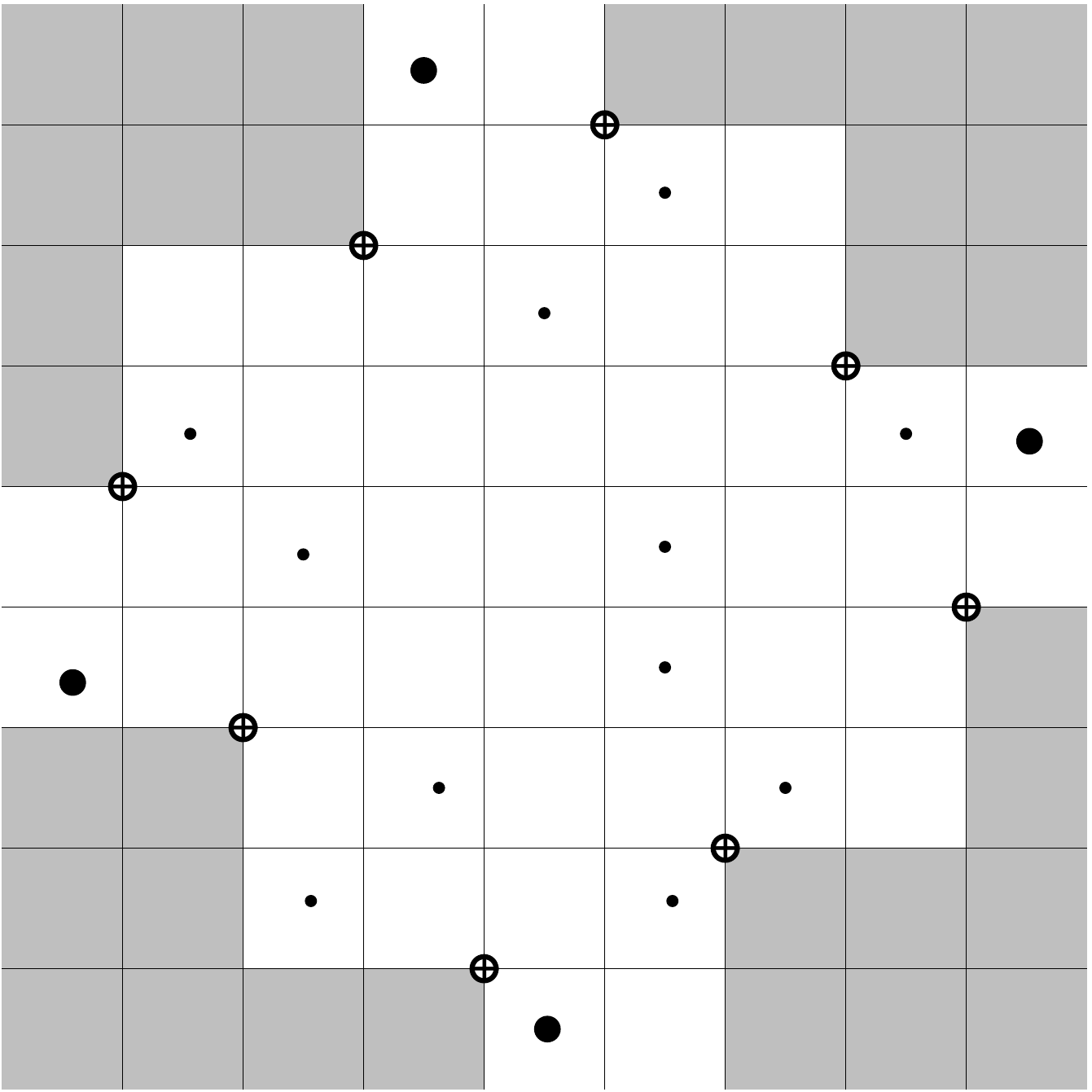_t}}
\caption{WRG graph with all witnesses having exactly one quadrant empty. The bold black vertices are the dominating set.}
\label{DominationNumber}
\end{figure}

By definition of a staircase graph, all its witnesses form a monotone ascending (for staircase graphs of type~$II$ and~$IV$) or descending (for staircase graphs of type~$I$ and~$III$) chain.
Moreover notice that all witnesses of type~$II$ must be in quadrant~$II$ or~$III$ of all witnesses of type~$I$ , all witnesses of type~$III$ must be in quadrant~$III$ or~$IV$ of all witnesses of type~$II$, all witnesses of type~$IV$ must be in quadrant~$IV$ or~$I$ of all witnesses of type $III$, and all witnesses of type $I$ must be in quadrant~$I$ or $II$ of all witnesses of type $IV$.
Reciprocally all witnesses of type~$I$ must be in quadrant~$I$ or~$IV$ of all witnesses of type~$II$, all witnesses of type~$II$ must be in quadrant~$I$ or~$II$ of all witnesses of type~$III$, all witnesses of type~$III$ must be in quadrant~$II$ or~$III$ of all witnesses of type $IV$, and all witnesses of type $IV$ must be in quadrant~$III$ or $IV$ of all witnesses of type $I$.
If one of these constraints is violated, one witness would have two adjacent quadrants that are empty, and therefore the set of witnesses would not be minimal, a contradiction.

From this set of constraints it follows that either the vertices in quadrant $II$ of the topmost witness of type $I$ are included in quadrant $I$ of the topmost witness of type $II$ and/or the vertices in quadrant $I$ of the topmost witness of type $II$ are included in quadrant $II$ of the topmost witness of type $I$. Hence there is a vertex $v_1$ in the intersection of quadrant $II$ of the topmost witness of type $I$ and quadrant $I$ of the topmost witness of type $II$. Similarly there is a vertex $v_2$ in the intersection of quadrant $III$ of the leftmost witness of type $II$ and quadrant $II$ of the leftmost witness of type $III$; there is a vertex $v_3$ in the intersection of quadrant $IV$ of the bottommost witness of type $III$ and quadrant $III$ of the bottommost witness of type $IV$; and finally there is a vertex $v_4$ in the intersection of quadrant $I$ of the rightmost witness of type $IV$ and quadrant $III$ of the rightmost witness of type $I$. Following the proof of Lemma~\ref{DominationNumberStaircase}, $v_1$ and $v_2$ form a dominating set for the staircase graph of type~$II$, $v_2, v_3$ form a dominating set for the staircase graph of type~$III$, $v_3, v_4$ form a dominating set for the staircase graph of type~$IV$, and $v_4, v_1$ form a dominating set for the staircase graph of type~$I$.
Therefore $v_1,v_2,v_3,v_4$ form a dominating set for WRG.

If there are witnesses of fewer than four types, four vertices are also sufficient.
Indeed as shown in the proof of Lemma~\ref{DominationNumberStaircase}, each staircase graph requires two vertices in its dominating set, and as we have seen above, staircase graphs of consecutive types share a dominating vertex.
\end{enumerate}
 \hspace*{0pt} \hfill $\Box$
\end{proof}

\section{Mutual witness graphs and set separability}
\label{sec:MNG}
In~\cite{MNG}, Manabu Ichino and Jack Sklansky introduce
``\emph{mutual neighborhood graph}'' defined by a pair of point sets $(P,W)$
in the plane.  This concept coincides with that of the negative
witness rectangle graph $\RG^-(P,W)$ in which there is an edge between
two points $p,q$ in $P$ if and only if $B(p,q)$ does not contain any witness $w
\in W$.  
For a fixed pair of sets $A,B$, they focus on the relation between the two negative witness rectangle graphs $\RG^-(A,B)$ and $\RG^-(B,A)$.
In particular, they claim that if $\RG^-(A,B)$ and $\RG^-(B,A)$ are complete graphs, then the pair $(A,B)$ is linearly separable.
Unfortunately this is not true, as demonstrated by the counter-example in Figure~\ref{MNG1}.
\begin{figure}
\centering
\includegraphics[scale=0.7]{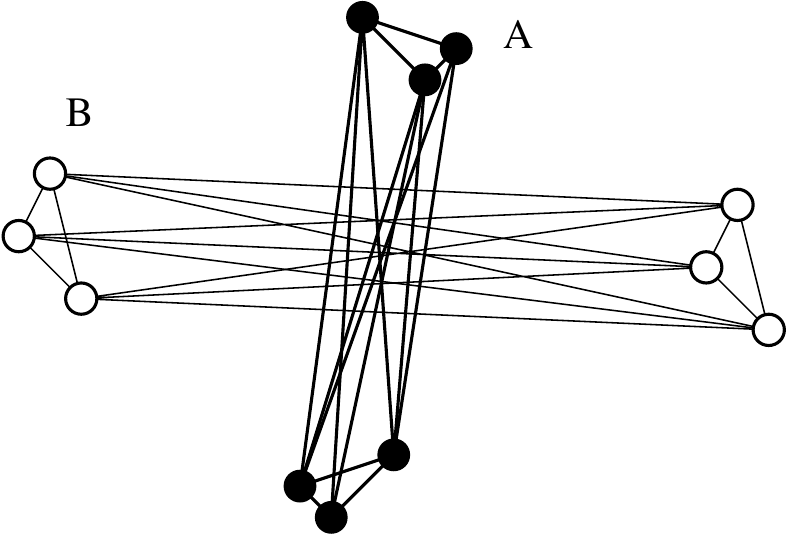}
\caption{$\RG^-(A,B)$ and $\RG^-(B,A)$ are complete (say $A$ is the set of black points and $B$ the set of white points) but the pair $(A,B)$ is not linearly separable.}
\label{MNG1}
\end{figure}

In~\cite{thesis}, the authors have studied some similar properties of corresponding witness proximity graphs in which one set is the set of vertices, the other set is the set of witnesses and inversely, and they prove that for the negative witness Delaunay graph ($\DG$, introduced in~\cite{ADH}), given two disjoint sets of points $A$ and $B$, if $\DG^-(A,B)$ and $\DG^-(B,A)$ are complete, the pair $(A,B)$ is circularly separable, but not necessarily linearly separable.
On the contrary, for the negative witness Gabriel graph ($\GG^-(P,W)$, introduced in~\cite{ADH08}), given two disjoint sets of points $A$ and $B$, if $\GG^-(A,B)$ and $\GG^-(B,A)$ are complete, the pair $(A,B)$ is linearly separable.

\section{Combining positive and negative witnesses}
\label{sec:PlusMinusW}
In this section we define a variant of witness rectangle graphs
that combine positive and negative witnesses. 
  Given a set of vertices $P$ in the plane, a
set $W^+$ of positive, and a set $W^-$ of negative witnesses, we
define $\RG^\pm(P,W^+,W^-)$ as follows.  We assume that $P \cup W^+
\cup W^-$ has no two points with the same $x$-coordinates and no two
points with the same $y$-coordinates.  Define $S(p,q):=|B(p,q) \cap
W^+|-|B(p,q) \cap W^-|$.  The graph has points of $P$ as vertices and $p
\sim q$ if and only if $S(p,q)>0$.  In words, $p$
is adjacent to $q$ whenever the rectangle defined by them has more
positive witnesses than negative ones.

This class of graphs is a proper generalization of both the positive 
and negative witness rectangle graphs, for an appropriate
choice of the sets $W^+$ and $W^-$.  
   In fact, \emph{any} graph can be realized:

\begin{theorem}
  Any combinatorial graph $G$ on $n$~vertices can be drawn as a witness
  rectangle graph with positive and negative witnesses using at most
  $(n-1)^2$~witnesses.
\end{theorem}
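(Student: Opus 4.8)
The plan is to realize an arbitrary graph $G=(V,E)$ on $n$ vertices by placing the vertices of $P$ along a line of slope $-1$ (so that for any two vertices the box $B(p,q)$ is a genuine rectangle with one pair of opposite corners at the two points), and then use a small ``gadget'' of one positive and one negative witness, placed very close together near the segment joining $p$ and $q$, to control the sign $S(p,q)$ for each pair independently. Concretely, I would order the vertices $v_1,\dots,v_n$ by increasing $x$-coordinate along the anti-diagonal, so the box $B(v_i,v_j)$ for $i<j$ is the axis-parallel rectangle whose lower-right corner is $v_i$ and whose upper-left corner is $v_j$; these $\binom{n}{2}$ rectangles are nested/stacked in a predictable way. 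For each pair $\{v_i,v_j\}$ I place a pair of witnesses---one in $W^+$ and one in $W^-$---extremely close to the midpoint of the segment $v_iv_j$, perturbed so that the positive one lies just inside $B(v_i,v_j)$ and, if $v_iv_j\notin E$, the negative one lies just inside it as well and cancels it, whereas if $v_iv_j\in E$ the negative one is pushed just outside $B(v_i,v_j)$ (but still inside every strictly larger box that contains the pair's location) so that the positive witness ``wins'' for exactly that pair.

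The main work is to verify that the contributions are globally consistent: a witness placed for the pair $\{v_i,v_j\}$ also lies inside $B(v_k,v_\ell)$ for various other pairs $\{v_k,v_\ell\}$, so I must arrange that its $W^+$ and $W^-$ partners are swept into or out of other boxes together. The clean way is to keep each positive/negative pair within distance $\varepsilon$ of each other, with $\varepsilon$ much smaller than any coordinate gap; then for every box $B(v_k,v_\ell)$ other than the designated one, either both witnesses of a pair are inside or both are outside, so their net contribution to $S(v_k,v_\ell)$ is zero. For the designated box the offset between the two witnesses is chosen (still within $\varepsilon$) on the correct side of the unique edge of $\partial B(v_i,v_j)$ closest to the midpoint, so exactly one of the two lies inside. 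This reduces the whole construction to a single local ``in or out'' decision per pair, and a counting argument then gives $S(v_i,v_j)=1$ if $v_iv_j\in E$ and $S(v_i,v_j)=0$ otherwise.

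The hard part---really the only subtle point---is the witness count. The naive scheme above uses $2\binom{n}{2}=n(n-1)$ witnesses, which exceeds the claimed bound $(n-1)^2$. To save, I would observe that the $n-1$ pairs $\{v_i,v_{i+1}\}$ of ``consecutive'' vertices have boxes that contain no other vertex and are pairwise disjoint from each other's interiors in a way that lets a single positive witness per such pair suffice (no cancelling negative witness is needed, since no larger box contains that witness without also separating it correctly)---or, alternatively, that one can drop $n-1$ of the negative witnesses entirely by placing the corresponding positive witnesses in ``private'' regions touched by no other relevant box. Either accounting removes $n-1$ witnesses from the total, yielding $n(n-1)-(n-1)=(n-1)^2$. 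I expect the bulk of the proof to be this bookkeeping: defining the coordinates, fixing the perturbation hierarchy $\varepsilon_1\gg\varepsilon_2\gg\cdots$, and checking the boundary cases for the consecutive pairs so that the $(n-1)^2$ bound is met exactly rather than merely up to a constant factor.
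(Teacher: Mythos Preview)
Your approach has a genuine geometric obstruction that cannot be fixed by tuning $\varepsilon$. With the vertices on a line of slope $\pm 1$, the boxes $B(v_i,v_j)$ form a nested family: $B(v_k,v_\ell)\supseteq B(v_i,v_j)$ if and only if $k\le i<j\le \ell$. In particular, whenever $1<i$ and $j<n$, the two strictly larger boxes $B(v_{i-1},v_j)$ and $B(v_i,v_{j+1})$ intersect \emph{exactly} in $B(v_i,v_j)$. Hence there is no point outside $B(v_i,v_j)$ that lies in every strictly larger box, and your plan to ``push the negative witness just outside $B(v_i,v_j)$ while keeping it inside every strictly larger box'' is impossible for all interior pairs. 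Relatedly, the midpoint of the segment $v_iv_j$ is the \emph{center} of the square $B(v_i,v_j)$, equidistant from all four sides, so there is no ``unique closest edge'' to straddle; and if you instead place the $\pm$ pair near a side of $B(v_i,v_j)$, that side is shared with the entire family of boxes $B(v_i,v_\ell)$ (or $B(v_k,v_j)$), so crossing it removes the negative witness from all of them at once. The contributions simply do not localize to a single pair, and the bookkeeping you anticipate cannot repair this.

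The paper's argument works precisely because it does not try to make any witness affect only one box. It puts the $p_i$ on the diagonal of an $n\times n$ integer grid and processes pairs $(p_i,p_{i+k})$ by increasing $k$. For each such pair the two off-diagonal corner cells of $B(p_i,p_{i+k})$ lie in this box but in none of the three smaller boxes $B(p_{i+1},p_{i+k})$, $B(p_i,p_{i+k-1})$, $B(p_{i+1},p_{i+k-1})$; an inclusion--exclusion identity expresses $S(p_i,p_{i+k})$ in terms of these three (already set to $0$ or $1$ by induction) plus the contents of the two corner cells, so it suffices to place at most one witness, of the appropriate sign, in each corner cell. Since each of the $(n-1)^2$ grid cells is used by at most one pair, the bound follows directly, with no separate ``saving'' step needed.
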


\begin{proof}
  Consider an $n \times n$ integer grid.  Identify the vertices of $G$
  with the points $p_i=(i,i)$ on the diagonal, for $i=1, \ldots, n$,
  in an arbitrary but fixed order (see Figure~\ref{WRG+-}).

  We never put witnesses on the grid lines.  We will place witnesses
  in the (open) squares of the grid, maintaining the following
  properties: (a)~each square contains at most one (positive or negative) witness, (b)~for
  any $p,q \in V(G)$, $S(p,q)$ is either zero or one, and
  (c)~$S(p,q)=1$ if and only if $p \sim q$.

  The process begins with $W^+=W^-=\emptyset$.  We start with the
  diagonal containing vertices and work our way outward.  We maintain
  the following invariant: property~(a) holds all the time, and, at
  step~$k$ properties (b)~and~(c) hold for all pairs $p_i,p_j$ with
  $|i-j|\leq k$.  When $k$ reaches $n-1$, we are done.

  \begin{itemize}
  \item[\emph{Base case ($k=1$)}:] For every pair of consecutive
    vertices $p_i$, $p_{i+1}$, we put a positive witness into
    $B(p_i,p_{i+1})$ if $p_i\sim p_{i+1}$ and place no witness
    otherwise.  The invariant is clearly satisfied (see Figure~\ref{WRG+-} left).
  \item[\emph{Inductive step ($k>1$)}:] Consider a pair of vertices
    $p_i,p_{i+k}$.  Let $X:=B((i,i+k-1),(i+1,i+k))$ 
    and $Y:=B((i+k-1,i-k),(i+k,i-k+1k))$.  We write $S_X$
    for $|X \cap W^+| - |X \cap W^-|$ and define $S_Y$ similarly.
    Since there are no witnesses on the grid lines, by
    inclusion-exclusion principle we have $S(p_i,p_{i+k})=S_X + S_Y +
    S(p_{i+1},p_{i+k})+ S(p_i,p_{i+k-1}) - S(p_{i+1},p_{i+k-1})$.  By
    the inductive hypothesis, each of the numbers $S(p_{i+1},p_k),
    S(p_i,p_{i+k-1}), S(p_{i+1},p_{i+k-1})$ is either zero or one.  In
    particular, the last three terms sum up to an integer between $-1$
    and $2$.  To maintain the invariant, we need $S(p,q)$ to be $0$ or
    $1$, depending on whether $p \sim q$.  Hence the combined
    contribution of $X$ and $Y$ has to be between $-2$ and $+2$.
    However, $S_X+S_Y$ can be arranged to have any value from
    $-2,-1,0,1,2$ by placing one witness of an appropriate sign, or no
    witness, in $X$ and in $Y$ (see Figure~\ref{WRG+-} right).

    Repeating this process for all pairs $(p_i,p_{i+k})$ and noticing
    that the each pair of vertices uses a pair of different boxes
    $(X,Y)$, we complete the inductive step: Every square of the grid
    still contains at most one witness, and for all pairs of vertices
    $(p_i,p_j)$ up to $k$ apart, the box $B(p_i,p_j)$ contains an
    equal number of positive and negative witnesses if $p_i \not\sim
    p_j$, and positive witnesses outnumber negative ones by one,
    otherwise.  \hfill $\Box$
  \end{itemize}
\end{proof}


\begin{figure}
\centering
\scalebox{0.5}{\input{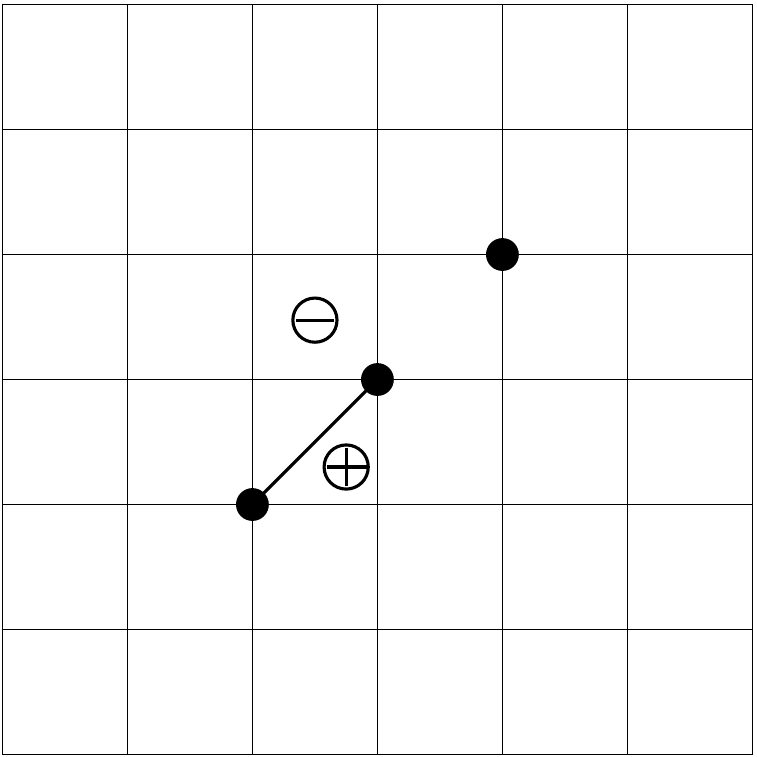_t}}
\hspace{0.2\linewidth}%
  \scalebox{0.5}{\input{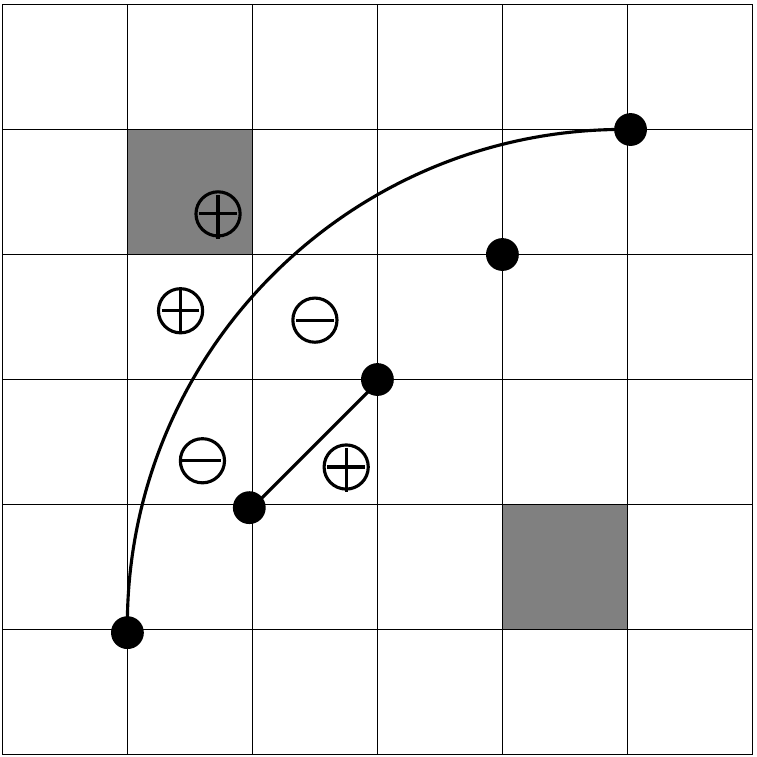_t}}
  \caption{Vertices are the black disks, positive witnesses are the white disks with a cross, and negative witnesses are the white disks with a minus sign.}
  \label{WRG+-}
\end{figure}

The proof suggests a number of open problems: 
Is it possible that a
significantly smaller number of witnesses is sufficient to realize all
possible combinatorial graphs? We conjecture that a quadratic number
of witnesses is necessary in the worst case. 
A related open problem is to find a ``good'' set of vertices to realize a given graph, as it can be proved that every graph can be realized
on top of any point set whose elements have no duplicated $x$ or $y$
coordinates, and the number of witnesses depends on the specific initial set.
Another open problem is to find classes of graphs
that are realizable with much fewer witnesses, say, $O(n)$ or $O(1)$.
Also attaching real-number weights to witnesses
may possibly allow one to realize graphs with fewer witnesses.

\section{How to Stab Rectangles, Thriftily}\label{section:stabbing}

Let $P$ be a set of $n$ points in the plane, and let $S$ be some given
family of geometric regions, each with at least two points from $P$ on
its boundary.  The problem of how many points are required to stab all
the elements of $S$ using a second set $W$ of points has been
considered several times for different families of
regions~\cite{ADH,ADH08,KM88,CKU99,SU07}. For example, among the
shapes previously investigated were the family of triangles with
vertices in $S$ and the family of disks whose boundary passes through
two points of $P$.
This family of problems has a natural formulation in terms of appropriate witness graphs.

We consider here a variant of this problem that is related to WRGs, in
which we focus on the family $R$ of all open isothetic rectangles
containing two points of $P$ on their boundary and assume that the
points of $P$ have no repeated $x$- or $y$-coordinates.  We denote by
$st_R(n)$ the maximum number of piercing points that are required,
when all sets $P$ of $n$ points are considered.  Stabbing all the
rectangles that have $p$ and $q$ on their boundary is equivalent to
just stabbing $B(p,q)$.  Therefore we see that
\[st_R(n)=\max_{|P|=n} \min \{|W| : \RG^+(P,W)=K_{|P|}\}.\]

\begin{theorem} \label{thm:stabbingRectangles}
  The asymptotic value of $st_R(n)$ is $2n-\Theta(\sqrt{n})$.
\end{theorem}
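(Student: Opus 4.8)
By the displayed identity I must show that for every $n$-point set $P$ with no repeated coordinates all $\binom{n}{2}$ open boxes $B(p,q)$, $p,q\in P$, can be pierced by $2n-\Omega(\sqrt n)$ points, and that for some $P$ no fewer than $2n-O(\sqrt n)$ points suffice. I would prove the two inequalities separately; the upper bound is constructive and the lower bound needs one carefully chosen extremal point set together with a counting argument.

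\textbf{Upper bound.} First I would establish the clean estimate $st_R(n)\le 2n-2$. Sort $P$ by $x$-coordinate as $p_1,\dots,p_n$ and, in the vertical strip between $p_i$ and $p_{i+1}$, place two witnesses at heights $\min(y_i,y_{i+1})+\varepsilon$ and $\max(y_i,y_{i+1})-\varepsilon$, i.e.\ just inside $B(p_i,p_{i+1})$ near each of its horizontal sides. To check that these $2(n-1)$ witnesses pierce an arbitrary box $B(p_a,p_b)$, say with $y_a<y_b$, inspect the sequence $y_a,y_{a+1},\dots,y_b$: either some intermediate value lies strictly in $(y_a,y_b)$, in which case the witness placed immediately above or below that point is inside the box; or the sequence ``straddles'' $(y_a,y_b)$ at some step, and a short case check shows that one of the two witnesses of the corresponding strip then lies strictly inside $(y_a,y_b)$ (at the first such step one uses the lower witness of strip $a$, at the last such step the upper witness of strip $b-1$). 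The negative-slope case is symmetric. To improve $2n-2$ to $2n-\Omega(\sqrt n)$ I would invoke Erd\H{o}s--Szekeres: the $x$-sorted sequence of $y$-values has a monotone subsequence of length $t\ge\lceil\sqrt n\rceil$. Along such a subsequence the boxes spanned by its points are nested (they form a staircase), so only the ``consecutive'' boxes among them are essential; this should let me delete, for all but $O(1)$ of the subsequence's points, one of its two witnesses --- keeping in each affected strip the witness on the staircase side --- while checking that every box with at most one endpoint on the subsequence is still pierced. That removes $t-O(1)=\Omega(\sqrt n)$ witnesses.

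\textbf{Lower bound.} Here I would exhibit one extremal $P$ and argue that every piercing set has size $\ge 2n-O(\sqrt n)$. The trivial half is that $B(p_i,p_{i+1})$ lies entirely in the $i$-th strip, so every piercing set contains at least one witness per strip, hence $\ge n-1$ witnesses always. The substantial task is to force (almost) a second witness in (almost) every strip. I would take $P$ with a periodic, oscillating pattern of $y$-values --- plausibly the point set whose $x$-sorted permutation is the Erd\H{o}s--Szekeres-tight one, a ``staircase of $\sqrt n$ staircases of size $\sqrt n$'' --- designed so that, for all but $O(\sqrt n)$ indices $i$, the ``skip-one'' box $B(p_i,p_{i+1})$ and a ``skip-two'' box $B(p_{i-1},p_{i+1})$ or $B(p_i,p_{i+2})$ that in this configuration can only be pierced through strip $i$ have disjoint vertical extents, so that strip $i$ genuinely requires two witnesses; the internal structure at scale $\sqrt n$ is exactly what caps the savings at $O(\sqrt n)$. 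To make this rigorous I expect a charging / double-counting argument (equivalently, an LP-duality argument with an explicit fractional packing of boxes), since merely exhibiting pairwise-disjoint boxes accounts for only $\Theta(n)$ witnesses.

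\textbf{Main obstacle.} The hard part is the lower bound: identifying the extremal configuration and pushing the constant in front of $n$ all the way to $2$ while simultaneously showing the $\Theta(\sqrt n)$ loss is unavoidable --- a naive disjoint-boxes packing accounts for only about $n$ witnesses, so the remaining $\sim n$ must come from a global argument. On the upper-bound side the only real subtlety is the book-keeping in the Erd\H{o}s--Szekeres step, i.e.\ verifying that no box is left unpierced once the redundant witnesses are removed.
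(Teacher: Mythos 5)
Your upper-bound construction is broken at the very first step, so the ``short case check'' you appeal to is not available. Placing only two witnesses per vertical strip, at heights $\min(y_i,y_{i+1})+\varepsilon$ and $\max(y_i,y_{i+1})-\varepsilon$, does not stab all boxes: take four points whose $y$-values in $x$-order are $0,-10,10,1$. The box spanned by the first and fourth points has vertical extent $(0,1)$, but the witness heights available in its $x$-range are (up to $\varepsilon$) $-10,\,0^-,\,-10,\,10^-,\,1^+,\,10^-$, none of which lies strictly in $(0,1)$; at the straddling step the strip's two witnesses sit near $-10$ and $10$, and the ``lower witness of strip $a$'' sits below $y_a$. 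Witnesses must be anchored to the \emph{points}, not to the tops and bottoms of consecutive boxes (e.g.\ one witness just NE and one just SE of each point, so that the left corner of every box supplies a witness inside it). The paper does not reprove this half at all: it imports the bound $st_R(n)\le 2n-\Omega(\sqrt n)$ from Theorem~6 of \cite{ADH}, and only sketches that the points receiving a single witness form a monotone chain, which by Erd\H{o}s--Szekeres can be taken of length $\Omega(\sqrt n)$ --- so your Erd\H{o}s--Szekeres saving is the right idea in spirit, but it would have to be carried out on a correct base placement, with the deletion step actually verified.

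The lower bound is where your plan diverges most from the paper, and it rests on a false premise. You dismiss disjoint-box packings as giving ``only about $n$'' witnesses and instead gesture at an oscillating configuration plus an unspecified charging/LP-duality argument, which you never pin down (no explicit point set, no counting). In fact the paper's lower bound \emph{is} a disjoint-box packing: start from a $\sqrt n\times\sqrt n$ grid, rotate it infinitesimally and perturb; then for every grid-adjacent pair $p,q$ the box $B(p,q)$ becomes a thin sliver hugging the grid edge, these $2\sqrt n(\sqrt n-1)=2n-2\sqrt n$ slivers are pairwise openly disjoint, and each must contain its own stabbing point, giving $st_R(n)\ge 2n-2\sqrt n$. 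The point you missed is that a single vertex can be a corner of up to four pairwise-disjoint slivers, which is exactly how the packing beats $n$. Your alternative ``skip-one/skip-two boxes per strip'' device also does not work as stated, since a box such as $B(p_{i-1},p_{i+1})$ spans two strips and can be pierced in either one. As written, then, neither direction of $st_R(n)=2n-\Theta(\sqrt n)$ is established by your proposal.
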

\begin{proof}
  We first construct a set $Q$ of $n$ points, no two of them with
  equal abscissa or ordinate, that admits a set of $2n-\Theta(\sqrt
  n)$ pairwise openly-disjoint rectangles, whose boundary contains two
  points from $Q$, which will imply the lower bound.

  Start with a grid of size $\sqrt n\times\sqrt n$, then rotate the
  whole grid infinitesimally clockwise, and finally perturb the points
  very slightly, so that no point coordinate is repeated and there are
  no collinearities.  The desired set of rectangles contains $B(p,q)$
  for every pair of points $p,q \in Q$ that were neighbors in the
  original grid; refer to Figure~\ref{fig:pairwiseDisjointRectangles}.

\begin{figure}[htbp!]
  \centering
  \includegraphics[scale=0.55]{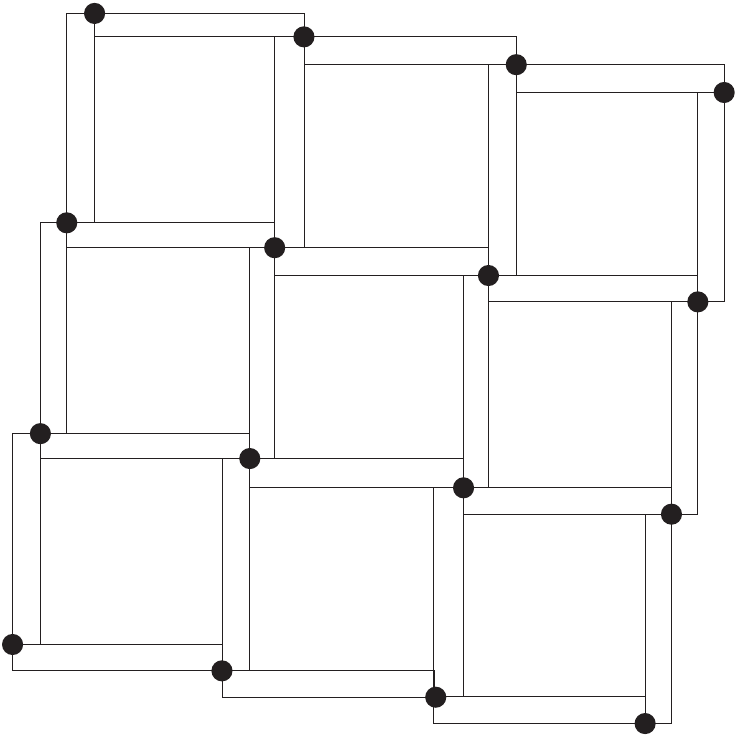}%
    \caption{Construction of a set $Q$ with a large number of rectangles
    with disjoint interiors, each one with two points from $Q$ on its
    boundary; every point of $Q$ participates in four rectangles, with
    the exception of those on the boundary of the configuration.}
  \label{fig:pairwiseDisjointRectangles}
\end{figure}

 \begin{figure}[htbp!]
  \centering
  \includegraphics[scale=0.9]{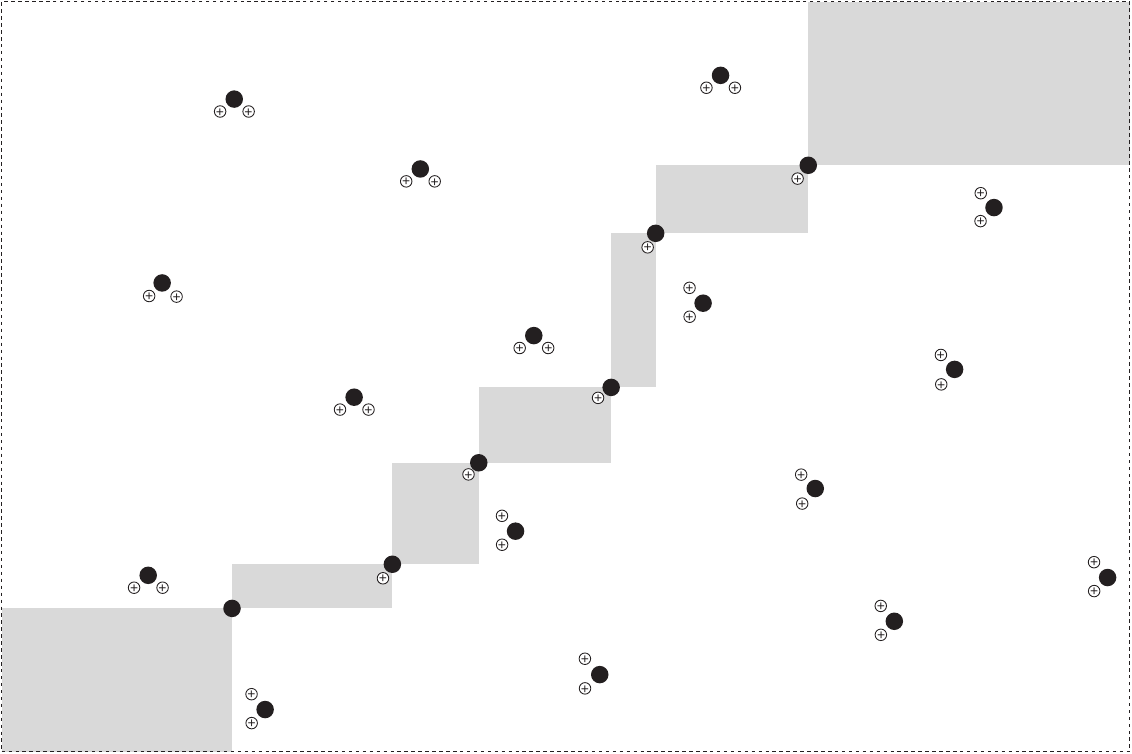}
  \caption{Construction, for a point set $P$, of a set $W$ of positive
    witnesses such that $\RG^+(P,W)=K_{|P|}$.}
  \label{fig:pairwiseDisjointRectangles2}
\end{figure}

For the proof that $2n-\Theta(\sqrt{n})$ points suffice to stab all the rectangles refer to Theorem 6 in \cite{ADH}. A sketch of the proof is
illustrated in Figure~\ref{fig:pairwiseDisjointRectangles2}. The points that get only one witness form an ascending chain.
A descending chain would be used similarly and one of these alternative chains must have
$\Omega(\sqrt{n})$ size.

\hfill $\Box$

  \end{proof}




\end{document}